\def\BibTeX{{\rm B\kern-.05em{\sc i\kern-.025em b}\kern-.08em
    T\kern-.1667em\lower.7ex\hbox{E}\kern-.125emX}}
\newtheorem{theorem}{Theorem}
\newtheorem{lemma}{Lemma}
\def\b0{{\pmb{0}}} 
\def\ba{{\mathbf{a}}}  \def\bc{{\mathbf{c}}} 
  \def\bg{{\mathbf{g}}} 
 \def\bn{{\mathbf{n}}}  
\def\bu{{\mathbf{u}}}   
 \def\bz{{\mathbf{z}}}  
\def\bA{{\mathbf{A}}}   
  \def\bG{{\mathbf{G}}} \def\bH{{\mathbf{H}}}
\def\bI{{\mathbf{I}}}   
 \def\bR{{\mathbf{R}}}  
  \def\bW{{\mathbf{W}}} \def\bX{{\mathbf{X}}}
 \def\bZ{{\mathbf{Z}}}
\begin{document}
	
\title{Anomaly Detection-Based UE-Centric \\ Inter-Cell Interference Suppression}

\author{Kwonyeol Park,\thanks{K. Park is with the S.LSI Division, Samsung Electronics Company Ltd., Hwaseong-si, 18448, South Korea, and also with the School of Electrical Engineering, Korea Advanced Institute of Science and Technology, Daejeon 34141, South Korea (e-mail: kwon10.park@kaist.ac.kr).}
Hyuckjin Choi\thanks{H. Choi, B. Ko, M. Kim, G. Lee, and J. Choi are with the School of Electrical Engineering, Korea Advanced Institute of Science and Technology (e-mail: \{hugzin008, kobs0318, mjkim97, iee4432, junil\}@kaist.ac.kr).}, Beomsoo Ko, Minje Kim, Gyoseung Lee, \\Daecheol Kwon,\thanks{D. Kwon is with the S.LSI Division, Samsung Electronics Company Ltd., Hwaseong-si, 18448, South Korea, and also with the School of Electrical and Computer Engineering, University of California San Diego, La Jolla, USA (e-mail: dckwon@ucsd.edu).}
\thanks{H. Park, B. Kim, and M. Shin are with the S.LSI Division, Samsung Electronics Company Ltd., Hwaseong-si, 18448, South Korea (e-mail: \{hyunjae.park, bseung.kim, minho.shin\}@samsung.com).}Hyunjae Park, Byungseung Kim, Min-Ho Shin, and Junil Choi
}



\maketitle

\begin{abstract}
The increasing spectral reuse can cause significant performance degradation due to interference from neighboring cells. In such scenarios, developing effective interference suppression schemes is necessary to improve overall system performance.
To tackle this issue, we propose a novel user equipment-centric interference suppression scheme, which effectively detects inter-cell interference (ICI) and subsequently applies interference whitening to mitigate ICI. The proposed scheme, named Z-refined deep support vector data description, exploits a one-class classification-based anomaly detection technique. 
Numerical results verify that the proposed scheme outperforms various baselines in terms of interference detection performance with limited time or frequency resources for training and is comparable to the performance based on an ideal genie-aided interference suppression scheme.
Furthermore, we demonstrate through test equipment experiments using a commercial fifth-generation modem chipset that the proposed scheme shows performance improvements across various 3rd generation partnership project standard channel environments, including tapped delay line-A, -B, and -C models.
\end{abstract}

\begin{IEEEkeywords}
Interference Whitening (IW), User Equipment (UE), Anomaly Detection, Support Vector Data Description (SVDD), 5G New Radio (NR) 
\end{IEEEkeywords}

\section{Introduction}
Over the past decade, the unprecedented expansion of wireless communication systems has profoundly transformed our daily lives. 
Compared to previous systems such as fourth-generation (4G) long-term evolution (LTE), fifth-generation (5G) new radio (NR) aims to achieve far-reaching advancement to satisfy diverse demands, including high data rate, ultra-reliability, and low latency \cite{liberg2024introducing}. 
To meet the challenging requirements, 5G NR employs core technologies, e.g., utilization of higher frequency bands at frequency range 2 (FR2), extensive deployment of massive multi-beam antennas, increased network densification, and adoption of flexible numerology \cite{rappaport2017overview, larsson2014massive, kim2024meta, parkvall2017nr}. 
There are, however, many issues to implement these technologies. For example, network densification can lead to severe inter-cell interference (ICI) due to sharing of limited spectrum resources. ICI can be one of the major throughput-limiting factors and degrades the system performance.
Therefore, developing effective interference suppression schemes is essential for improving the performance of advanced wireless communication systems \cite{deng2017resource}.

Prior research has proposed several ICI suppression techniques for practical systems. For instance, 4G LTE included coordinated multi-point (CoMP), network-assisted interference cancellation and suppression (NAICS), and inter-cell interference coordination (ICIC) \cite{kuo2015new, bassoy2017coordinated, hamza2013survey}. Building upon these foundations, 5G NR introduced advanced interference mitigation schemes. One such scheme is beamforming coordination between gNodeBs (gNBs) in densely deployed millimeter-wave systems \cite{kim2017inter}. Another scheme is a semi-distributed dynamic ICIC, which is particularly suitable for heterogeneous environments such as two-tier femtocell networks \cite{padmaloshani2020semi}. 
While 5G NR employs various interference suppression schemes that facilitate deployment and ensure backward compatibility with legacy communication systems \cite{siddiqui2021interference}, these conventional schemes primarily rely on network-centric approaches, where multiple gNBs collaborate to mitigate interference. These network-centric interference suppression schemes face practical hurdles such as increased feedback overhead. To address these challenges in 5G systems, user equipment (UE)-centric interference suppression schemes, where the UE independently manages interference without the coordination of gNBs, have recently emerged as a viable solution in practical systems \cite{chaudhari2022machine}.

Whether network-centric or UE-centric, several interference suppression schemes can be considered as potential solutions, e.g., interference cancellation and interference alignment \cite{mikami2011inter, ntranos2015cellular}. However, acquiring channel state information (CSI) of ICI remains particularly challenging due to the dynamic nature of wireless channels and limited feedback capabilities in practical wireless communication systems \cite{lee2021downlink}. Therefore, these schemes encounter significant challenges in practical implementation. 
A simple, yet more efficient UE-centric interference suppression scheme could be interference whitening (IW)\cite{yu2013enhanced, kang2008pre, kim2022design}. IW treats ICI as colored Gaussian noise and whitens it together with the background noise. This process is straightforward and low-complexity since it employs a simple linear transformation without requiring exact knowledge of ICI characteristics. Unlike more complex interference suppression schemes, IW avoids decoding the interference signal individually, offering computational efficiency and easier implementation in practical systems.

While IW is a simple and effective UE-centric approach, its performance is well known to vary depending on the environment \cite{park2021reinforcement}. Specifically, IW is highly effective in interference-dominant environments, where ICI produces a consistent colored noise pattern, enabling better estimation of the whitening matrix even with limited reference signals. Conversely, in noise-dominant environments, IW may perform poorly due to inaccurate whitening matrix estimation, as the randomness of noise with limited reference signals leads to performance degradation. Therefore, the successful implementation of IW at the UE critically depends on accurately detecting the presence of ICI, as its performance is highly context-dependent.

In 5G NR, a UE can identify the presence of ICI utilizing two types of measurements. The first measurement is the reference signal received power (RSRP) from synchronization signal blocks (SSB) transmitted by neighboring gNBs. The second measurement is the CSI-interference measurement (CSI-IM) \cite{3gpp.38.214, 3gpp.38.215}, which uses strategically blanked resource elements. Based on these measurements, the UE can identify an interference-dominant environment when the RSRP of neighboring gNBs or the CSI-IM values exceed predefined thresholds. However, these measurements cannot detect ICI in the data region during actual data transmission, e.g., physical downlink shared channel (PDSCH). This limitation arises because the periodicity of both SSB and CSI-IM is typically much longer than the duration of data transmission \cite{3gpp.38.331}. Furthermore, since RSRP and CSI-IM utilize frequency resources different from those in PDSCH, accurately inferring ICI within the data region is challenging.

To overcome these limitations in 5G NR systems, we propose a novel UE-centric interference suppression scheme that leverages anomaly detection to effectively identify ICI and subsequently suppress it, as shown in Fig.~\ref{blockdiagram}. Anomaly detection is a technique designed to recognize patterns that deviate from expected behavior and is widely used in fields such as healthcare, military operations, and network security \cite{nassif2021machine, huszar2023toward, shaodiwang2023}. In the context of dynamic wireless environments, anomaly detection offers a promising solution for classifying the presence of ICI by identifying deviations from normal patterns.

Among various anomaly detection techniques, we focus on one-class classification (OCC) due to its structural simplicity and effectiveness \cite{ruff2018deep}.
Other types of anomaly detection techniques often involve complex architectures. For example, reconstruction-based anomaly detection techniques require either generator-discriminator pairs or encoder-decoder architectures \cite{venkataramanan2020attention, hou2021divide, ristea2022self}, while knowledge distillation-based anomaly detection techniques entail simultaneous training of teacher and student models \cite{bergmann2020uninformed, salehi2021multiresolution, deng2022anomaly}. In contrast, the OCC constructs a decision boundary for latent features of normal data using a single neural network \cite{liznerski2020explainable, yi2020patch}. This simplicity and computational efficiency make OCC a suitable approach for UE-side implementation, given the hardware limitations of the UE.
Building on these benefits, we propose a novel interference suppression scheme, named Z-refined deep support vector data description (ZRD-SVDD), which employs a deep learning-based SVDD for anomaly detection. The main contributions of this paper are summarized as follows:

\begin{itemize}
    \item We demonstrate that the proposed ZRD-SVDD outperforms conventional OCC-based anomaly detection techniques used as baselines, e.g., one-class support vector machine (OC-SVM) and $k$-nearest neighbor ($k$-NN), in terms of detection accuracy. This superior performance is validated using various metrics derived from the confusion matrix.

    \item Despite its simple structure, numerical results reveal that the proposed ZRD-SVDD-based IW scheme demonstrates remarkable performance. It not only outperforms baseline-based IW and conventional IW schemes but also effectively approaches the performance of the ideal IW scheme. This ideal scheme, analogous to a genie-aided system, assumes complete knowledge of ICI presence, which represents a performance upper bound. Notably, our proposed scheme achieves performance comparable to the ideal scheme while utilizing only a small fraction of time or frequency resources.

    \item To validate the practical applicability of our proposed scheme, we implement the proposed scheme on a commercial modem chipset, i.e., Exynos 5400, and conduct experiments using test equipment (TE), i.e., Anritsu MT8000A. We evaluate the performance across various channel environments, including representative 3rd generation partnership project (3GPP) channel models: tapped delay line (TDL)-A, -B, and -C. The experimental results confirm that the proposed ZRD-SVDD-based IW scheme outperforms conventional IW schemes in realistic scenarios.
\end{itemize}

The remainder of this paper is organized as follows. Section~II describes the system and channel models and briefly introduces the conventional IW scheme. In Section~III, we present our proposed interference suppression scheme based on the anomaly detection technique. In Section~IV, we provide results and discussion from both numerical simulation and TE experiments. Finally, Section~V concludes the paper.

\begin{figure}[!t]
    \centering
   \includegraphics[width=1\columnwidth]{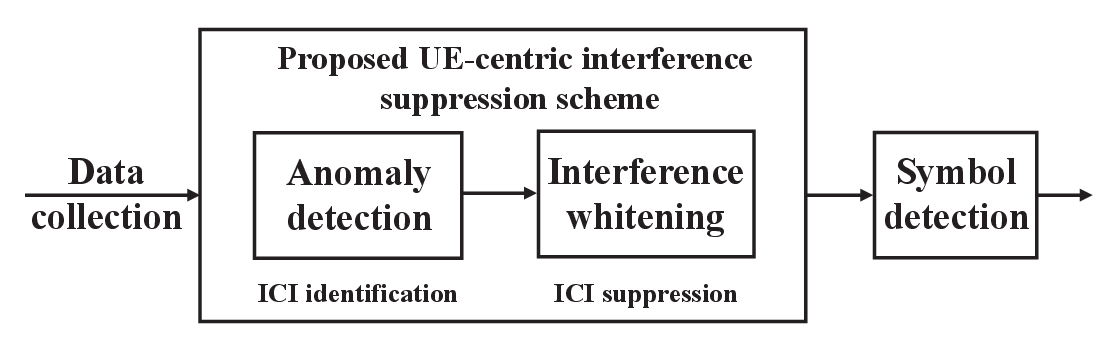}
    \caption{Block diagram of the proposed scheme.}
    \label{blockdiagram}
\end{figure}

\textbf{Notation:} 
Upper case and lower case boldface letters denote matrices and column vectors, respectively. 
The conjugate transpose, inverse, trace, and vectorization of a matrix $\mathbf{A}$ are denoted as $\mathbf{A}^{\mathrm{H}}$, $\mathbf{A}^{-1}$, $\mathrm{Tr}\left( \mathbf{A} \right)$, and $\mathrm{vec}\left(\mathbf{A}\right)$, respectively. The conjugate transpose of the matrix inverse $\mathbf{A}^{-1}$ is $\mathbf{A}^\mathrm{-H}$. The identity matrix and all-zero matrix of size $n \times n$ are represented by $\mathbf{I}_n$ and $\mathbf{0}_n$.
A circularly symmetric complex Gaussian distribution with mean vector $\mathbf{m}$ and covariance matrix $\mathbf{K}$ is represented using $\mathcal{CN}(\mathbf{m}, \mathbf{K})$. 
The set of all $m \times n$ matrices with complex-valued entries are symbolized by $\mathbb{C}^{m\times n}$, while $\mathbb{R}^n$ represents the set of $n$-dimensional real-valued vectors. The $\ell_2$-norm of a vector $\ba$, the spectral norm, and the Frobenius norm of a matrix $\bA$ are expressed as $\Vert \ba \Vert_2$, $\Vert \bA \Vert_2$, and $\lVert\bA\rVert_{\mathrm{F}}$, respectively. The expectation of a random variable $\bX$ is denoted by $\mathbb{E}[\bX]$.

\section{System Model and Problem Formulation}
\label{sec2}
In this section, we first introduce the system and channel model of interest and then discuss the concept and limitations of the conventional IW scheme, which motivates our proposed technique.
\subsection{System and channel model}
\label{sec2.A}
We consider a multi-cell downlink (DL) multiple-input multiple-output (MIMO) orthogonal frequency division multiplexing (OFDM) system as depicted in Fig.~\ref{scenario}, where each gNB serves its own UE. The gNBs and UEs have $M_{t}$ and $N_{r}$ uniform linear array (ULA) antennas, respectively. The DL received signal $\mathbf{y}_{m}$ of the UE~1 associated with the serving gNB in the $m$-th subcarrier is expressed as
\begin{align}\label{eq:system_model}
    \mathbf{y}_{m} &= \mathbf{H}_{m}\mathbf{x}_{m} +\sum_{n=1}^{N_{c}} \mathbf{G}_{n,m}\mathbf{z}_{n,m} + \mathbf{n}_{m} \\
    &= \mathbf{H}_{m}\mathbf{x}_{m} + \mathbf{G}_{m}\mathbf{z}_{m} + \mathbf{n}_{m},
\end{align} 
where $\mathbf{H}_{m}\in\mathbb{C}^{N_r\times M_t}$ is the desired channel between the serving gNB and the UE~1, and $\mathbf{G}_{n,m}\in\mathbb{C}^{N_r\times M_t}$ denotes the interference channel from the $n$-th neighboring gNB to UE~1. The aggregated interference channel $\mathbf{G}_{m}\in\mathbb{C}^{N_r\times M_t N_c}$ includes the interference channels from all $N_c$ neighboring gNBs and can be expressed as $\mathbf{G}_{m} = [\mathbf{G}_{1,m}, \cdots, \mathbf{G}_{N_c,m}]$. The transmit signal from the serving gNB is denoted as $\mathbf{x}_{m}\in\mathbb{C}^{M_t\times 1}$, satisfying $\mathbb{E}[\mathbf{x}_m \mathbf{x}_m^{\mathrm{H}}]=P_{\mathrm{S}} \mathbf{I}_{M_t}$, where $P_{\mathrm{S}}$ is the transmit power of the serving gNB. Likewise, the interference signal from the $n$-th gNB is represented by $\mathbf{z}_{n,m}\in\mathbb{C}^{M_t\times 1}$, satisfying $\mathbb{E}[\mathbf{z}_{n,m} \mathbf{z}_{n,m}^{\mathrm{H}}]=P_{\mathrm{I}} \mathbf{I}_{M_t}$, where $P_{\mathrm{I}}$ is the transmit power of the $n$-th gNB. The aggregated interference signal $\mathbf{z}_m \in \mathbb{C}^{M_tN_c \times1}$ is constructed by stacking the interference signals from the $N_c$ neighboring gNBs as $\mathbf{z}_m = [\mathbf{z}_{1,m}^{\mathrm{T}}, \cdots, \mathbf{z}_{N_c,m}^{\mathrm{T}}]^{\mathrm{T}}$. The Gaussian noise $\mathbf{n}_{m}\in\mathbb{C}^{N_r\times 1}$ follows $\mathcal{CN}(\b0_{N_r},\sigma_{m}^2\mathbf{I}_{N_r})$. The noise variance of the $m$-th subcarrier can be evaluated as $\sigma_{m}^2=N_0W$ where $N_0$ and $W$ are the noise spectral density and subcarrier spacing, respectively, assuming all subcarriers have the same bandwidth.

We consider the geometrical channel model for $\mathbf{H}_{m}$ and $\mathbf{G}_{m}$ \cite{liu2012cost}. The geometrical channel model is a path-based model, where the path defines the trajectory of the signal traveling from the gNBs to the UEs. A group of paths belongs to a cluster, which describes the channel scattering. To define the $m$-th subcarrier channel, we consider the continuous-time channel $\boldsymbol{\mathcal{H}}(t)\in\mathbb{C}^{N_r\times M_t}$ represented by
\begin{align}\label{eq:channel_model}
    &\boldsymbol{\mathcal{H}}(t) = \nonumber \\
&\quad\sum_{\ell=1}^{N_\ell}\sum_{p=1}^{N_p} \alpha_{\ell,p}e^{-j2\pi f_c t}\delta(t-\tau_{\ell,p}) \mathbf{a}_{N_r}(\theta_{\ell,p})\mathbf{a}_{M_t}^\mathrm{H}(\phi_{\ell,p}),
\end{align}
where $N_{\ell}$ is the number of clusters, $N_p$ is the number of paths in the $\ell$-th cluster, and $f_c$ denotes the carrier frequency. For the $p$-th path in the $\ell$-th cluster, $\alpha_{\ell,p}$, $\tau_{\ell,p}$, $\theta_{\ell,p}$, and $\phi_{\ell,p}$ represent the propagation loss, delay, angle of arrival (AoA), and angle of departure (AoD), respectively. The propagation loss $\alpha_{\ell,p}$ is modeled as 
\begin{align}    \alpha_{\ell,p}=\prod_{q=1}^{Q_p}\frac{\sqrt{P_0/4\pi}}{\eta(r_{\ell,p,q})},
\end{align}
where $P_0$ is the reference loss, and $\eta(r_{\ell,p,q})$ is the pathloss function depending on the propagation distance $r_{\ell,p,q}$ \cite{series2017guidelines, choi2023withray}. The number of linear trajectories in the $p$-th channel path is $Q_p$. The vector $\mathbf{a}_{N_r}(\cdot)$ represents an ${N_r}$-dimensional array steering vector as
\begin{align}
    \mathbf{a}_{N_r}(\theta_{\ell, p}) = [1\ e^{j\frac{2\pi d}{\lambda}\cos(\theta_{\ell, p})}\ \cdots\ e^{j\frac{2\pi d}{\lambda}({N_r}-1)\cos(\theta_{\ell, p})}]^{\mathrm{T}},
\end{align}
where $d$ and $\lambda$ are the antenna spacing and the wavelength, respectively. Similarly, $\ba_{M_t}\left( \cdot \right)$ denotes the $M_t$-dimensional array steering vector.

The $m$-th subcarrier channel results from the discrete Fourier transform (DFT), denoted as $\mathcal{F}\left\{\cdot\right\}$, of the channel in \eqref{eq:channel_model}. The desired channel $\mathbf{H}_{m}$ at the $m$-th subcarrier is expressed as
\begin{align}\label{eq:dft}
    &\mathbf{H}_{m}=\mathcal{F}\{\boldsymbol{\mathcal{H}}(t)\}=\int_{-\infty}^\infty \boldsymbol{\mathcal{H}}(t)e^{-j2\pi f_s \frac{m}{K}t}dt \nonumber\\
    &=\sum_{\ell=1}^{N_\ell}\sum_{p=1}^{N_p}\alpha_{\ell,p}e^{-j2\pi f_c \tau_{\ell,p}}
    e^{-j2\pi f_s\frac{m}{K}\tau_{\ell,p}}
    \mathbf{a}_{N_r}(\theta_{\ell,p})\mathbf{a}_{M_t}^\mathrm{H}(\phi_{\ell,p}),
\end{align}
where $f_s$ is the sampling frequency, and $K$ is the total number of subcarriers. 
The interference channel $\mathbf{G}_{m}$ at the $m$-th subcarrier can be similarly defined as in \eqref{eq:dft}.

\begin{figure}[!t]
    \centering
    \includegraphics[width=\columnwidth]{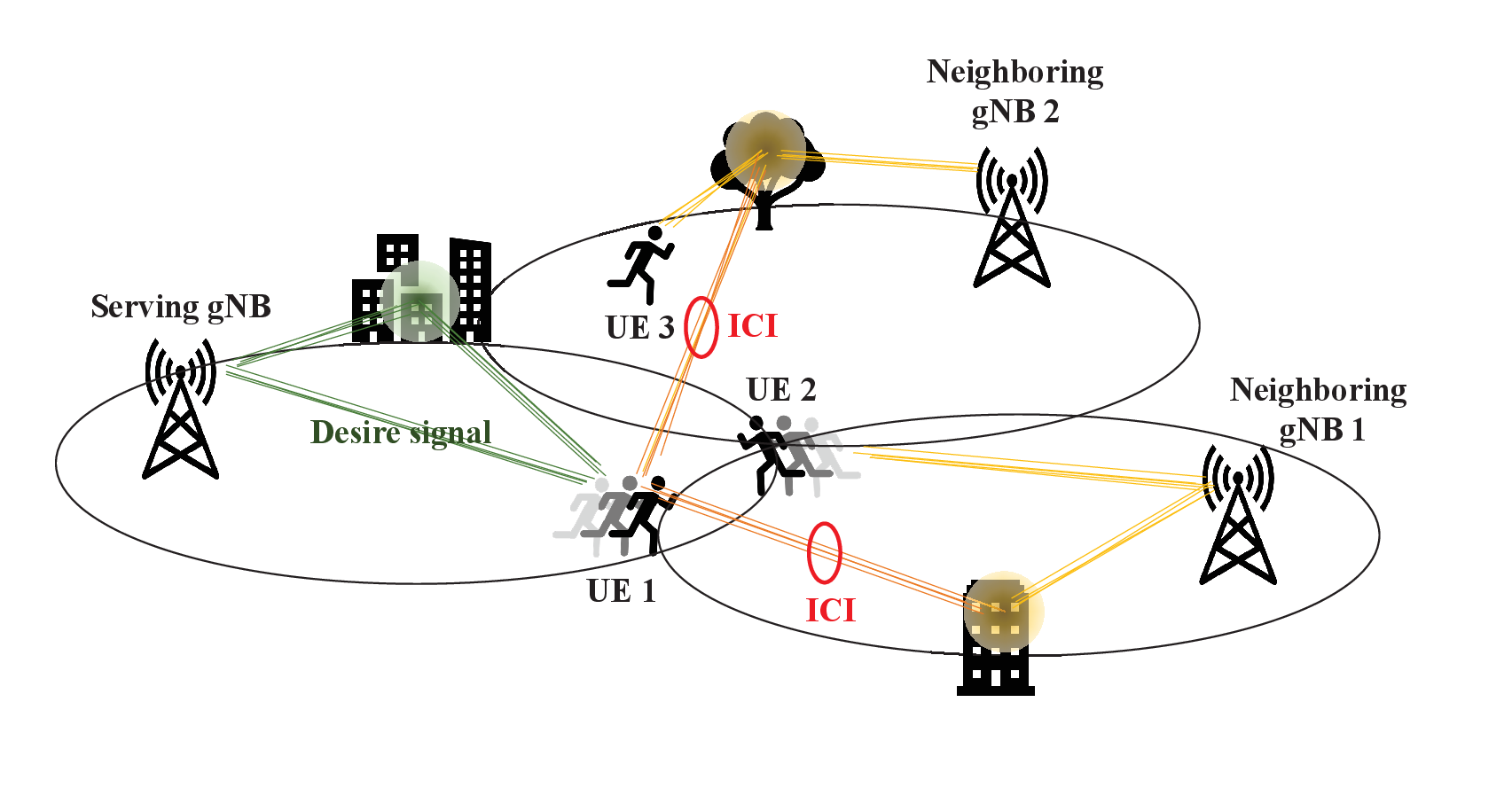}
    \caption{System model for the case of $N_c = 2$.}
    \label{scenario}
\end{figure}

\noindent{\textbf{\textit{Remark 1:}}} Defining clusters is beneficial for highlighting the temporal and spatial correlations of channels. Channel paths and clusters do not largely change over consecutive short-term sampling periods, resulting in channel variations that can be observed in actual measurement campaigns. 

\subsection{Conventional IW scheme and limitations}
\label{sec2.B}
When a UE is on cell edges, the colored noise, i.e., ICI, significantly degrades the decoding performance of the UE. To cope with this issue, the IW is employed to mitigate the effects of ICI and transform the characteristics of colored noise to approximately those of white noise. The process of the IW begins by characterizing the combined impacts of interference and noise on the received signal. At the $m$-th subcarrier, this combined signal is represented by the vector $\mathbf{u}_{m}$ as follows
\begin{align}\label{eq:NIV vector}
    \mathbf{u}_{m} &= \mathbf{G}_{m}\mathbf{z}_{m} + \mathbf{n}_{m}.
\end{align}
Under the assumptions of $\mathbb{E}[\mathbf{z}_m]=\boldsymbol{0}$ and $P_{\mathrm{I}}=1$, we can compute its covariance matrix as
\begin{align}\label{eq:covariance}
    \mathbf{R}_{m} = \mathbb{E} \left[\mathbf{u}_{m} \mathbf{u}_{m}^\mathrm{H} \right]  = \mathbf{G}_{m} \mathbf{G}_{m}^\mathrm{H} + \sigma_{m}^2 \mathbf{I}_{N_r}.
\end{align}
The covariance matrix can be factorized using the Cholesky decomposition, yielding $\mathbf{R}_{m} = \mathbf{L}_m \mathbf{L}_m^\mathrm{H}$, where $\mathbf{L}_m$ is a lower triangular matrix.
The whitened signal vector $\mathbf{y}_m^{\left(w\right)}$ is obtained by pre-multiplying the received signal vector $\mathbf{y}_m$ with the inverse of $\mathbf{L}_m$ \cite{sayed2020interference}, which results in the whitened covariance matrix $\mathbf{R}_{m}^{\left(w\right)}$ for $\mathbf{u}_{m}$  given by
\begin{align}\label{eq:whitendR}    \mathbf{R}_{m}^{\left(w\right)} 
    &= \mathbb{E} \left[\left(\mathbf{L}_m^{-1}\mathbf{u}_{m}\right)\left(\mathbf{L}_m^{-1}\mathbf{u}_{m}\right)^\mathrm{H} \right]\nonumber\\ 
    &= \mathbf{L}_m^{-1} \mathbb{E} \left[ \mathbf{u}_{m} \mathbf{u}_{m}^\mathrm{H} \right] \mathbf{L}_m^{-\mathrm{H}} = \mathbf{I}_{N_r}.
\end{align}

The key to achieving robust and optimal IW performance lies in accurately calculating the covariance matrix $\mathbf{R}_m$. However, the number of reference signals for estimating the effective ICI is usually limited in practical systems. Consequently, the sampled covariance tends to deviate from the true value of $\mathbf{R}_m$, which makes it challenging to obtain an accurate estimate. The following theorem provides insight into this estimation problem.

\begin{theorem}
    For a finite number of samples \(T_s\), the discrepancy between the sample covariance and the true covariance satisfies the following inequality
\begin{align}
\label{eq:Lemma1}
    &\mathbb{P}\left(  \left\Vert \hat{\bR}_m - \bR_m \right \Vert_2  < \epsilon  \right) \notag \ \\ 
&\quad\quad\quad \geq 1- 2N_r\exp\left(-\frac{\epsilon^2T_s^2/2}{\sigma_{\mathrm{F}}^2 + {2L_z\epsilon T_s}/{3}}\right),
\end{align}
where \( \hat{\bR}_m = \frac{1}{T_s} \sum_{t=1}^{T_s} \bu_{m,t} \bu_{m,t}^\mathrm{H} \) is the sample covariance with the $t$-th received interference plus noise signal $\bu_{m,t}$, and \(\epsilon\) denotes a design parameter representing the estimation error tolerance level. The parameters \(\sigma_{\mathrm{F}}^2\) and \( L_z\) are defined as follows
\begin{align}
\label{sigma_F_2_lemma}
\sigma_{\mathrm{F}}^2 = T_s &\sigma_m^2 \sqrt{N_r^2 \Vert \bg_m  \Vert_2^4 + 2 C_1N_r  \Vert \bg_m  \Vert_2^2 + C_1^2  N_r}, \notag \\
L_z &\approx 2 \sigma_m \sqrt{N_r} \Vert \bg_m \Vert_2 + \left( N_r - 1 \right) \sigma_m^2,
\end{align}
where \( \bg_m = \bG_m \bz_m\) and $C_{1} = \Vert \bg_m \Vert_2^2  + N_r \sigma_m^2 $.
\begin{proof}
     Refer to Appendix A.
\end{proof}
\end{theorem}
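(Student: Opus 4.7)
The plan is to derive the inequality via the matrix Bernstein concentration bound applied to the zero-mean sum that underlies the sample covariance estimator. Define $\bX_t = \bu_{m,t}\bu_{m,t}^{\mathrm{H}} - \bR_m$ for $t=1,\dots,T_s$, so that each $\bX_t$ is an i.i.d., zero-mean, Hermitian random matrix in $\mathbb{C}^{N_r \times N_r}$, and $\hat{\bR}_m - \bR_m = \tfrac{1}{T_s}\sum_{t=1}^{T_s}\bX_t$. The event of interest rewrites as
\begin{equation*}
\bigl\{\lVert \hat{\bR}_m - \bR_m\rVert_2 < \epsilon\bigr\} \;=\; \Bigl\{\Bigl\lVert \textstyle\sum_{t=1}^{T_s}\bX_t\Bigr\rVert_2 < \epsilon T_s\Bigr\},
\end{equation*}
which explains the $\epsilon^2 T_s^2/2$ and $\epsilon T_s$ scalings that appear in the exponent of \eqref{eq:Lemma1}.

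Next, I would identify the two Bernstein-type parameters. The spectral-norm parameter $L_z$ comes from bounding $\lVert \bX_t\rVert_2$. Conditioning on the interference so that $\bu_{m,t} = \bg_m + \bn_{m,t}$ with $\bn_{m,t}\sim\mathcal{CN}(\b0,\sigma_m^2\bI_{N_r})$, one writes $\bX_t = \bg_m\bn_{m,t}^{\mathrm{H}} + \bn_{m,t}\bg_m^{\mathrm{H}} + (\bn_{m,t}\bn_{m,t}^{\mathrm{H}} - \sigma_m^2\bI_{N_r})$, and bounds each piece: the cross terms contribute at most $2\sigma_m\sqrt{N_r}\,\lVert\bg_m\rVert_2$ on a typical Gaussian event, and the centered noise outer product contributes $(N_r-1)\sigma_m^2$. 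This is precisely why the theorem states $L_z\approx 2\sigma_m\sqrt{N_r}\lVert\bg_m\rVert_2 + (N_r-1)\sigma_m^2$ with an approximation sign: the Gaussian tail is handled through an ``effective'' bound rather than an almost-sure one.

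The variance parameter $\sigma_{\mathrm{F}}^2$ comes from $\bigl\lVert \sum_{t=1}^{T_s}\mathbb{E}[\bX_t^2]\bigr\rVert$, which by i.i.d.\ reduces to $T_s\,\lVert\mathbb{E}[\bX_1^2]\rVert$. I would expand $\mathbb{E}[\bX_1^2] = \mathbb{E}[\lVert\bu_m\rVert_2^2\,\bu_m\bu_m^{\mathrm{H}}] - \bR_m^2$ and evaluate it using Isserlis'/Wick's formula for Gaussian fourth moments of $\bn_m$. Upper bounding the resulting expression through the Frobenius norm (which majorizes the spectral norm and accounts for the subscript ``F'' in $\sigma_{\mathrm{F}}^2$), and collecting terms in powers of $\lVert\bg_m\rVert_2^2$ and $C_1 = \lVert\bg_m\rVert_2^2 + N_r\sigma_m^2$, yields $\sigma_{\mathrm{F}}^2 = T_s\sigma_m^2\sqrt{N_r^2\lVert\bg_m\rVert_2^4 + 2C_1 N_r \lVert\bg_m\rVert_2^2 + C_1^2 N_r}$.

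With these two parameters in hand, I would invoke the matrix Bernstein inequality with threshold $s=\epsilon T_s$, producing the bound $\mathbb{P}(\lVert\sum_t\bX_t\rVert_2 \geq \epsilon T_s) \leq 2N_r\exp\!\bigl(-\tfrac{\epsilon^2 T_s^2/2}{\sigma_{\mathrm{F}}^2 + 2L_z\epsilon T_s/3}\bigr)$, and complement it to recover \eqref{eq:Lemma1}. The main obstacle in this plan is step three: the explicit fourth-moment calculation for $\mathbb{E}[\bX_1^2]$ in the mixed (deterministic interference plus Gaussian noise) setting is the most technical part and must be organized carefully so that the resulting Frobenius bound collapses into the compact radical expression for $\sigma_{\mathrm{F}}^2$. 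A secondary obstacle is justifying the approximation in $L_z$: because the Gaussian noise is unbounded, the bound is effective rather than almost-sure, and one should either truncate to a high-probability event or appeal to a sub-exponential matrix Bernstein variant, absorbing the tail into the ``$\approx$''.
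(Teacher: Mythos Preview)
Your proposal is correct and follows essentially the same route as the paper: apply the matrix Bernstein inequality to the centered summands $\bX_t=\bu_{m,t}\bu_{m,t}^{\mathrm H}-\bR_m$ at threshold $\epsilon T_s$, relax the spectral-norm variance proxy to a Frobenius norm (whence the ``F'' in $\sigma_{\mathrm F}^2$), expand $\bX_t=\bg_m\bn_{m,t}^{\mathrm H}+\bn_{m,t}\bg_m^{\mathrm H}+\bn_{m,t}\bn_{m,t}^{\mathrm H}-\sigma_m^2\bI_{N_r}$ to compute $\mathbb{E}[\bX_t\bX_t^{\mathrm H}]$ term by term, and handle $L_z$ via an effective (expected-norm) bound rather than an almost-sure one. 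The only cosmetic difference is that the paper evaluates $\mathbb{E}[\bX_t\bX_t^{\mathrm H}]$ by directly computing the four cross-term expectations rather than through the compact identity $\mathbb{E}[\lVert\bu_m\rVert_2^2\bu_m\bu_m^{\mathrm H}]-\bR_m^2$ you mention, but the computations are equivalent.
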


While the whitening process of the IW should theoretically enhance decoding performance by mitigating ICI, the estimation error may lead to suboptimal outcomes. To ensure an effective whitening process, it is crucial to minimize the difference between the sample and true covariance matrices. This can be achieved by increasing the number of samples, i.e., reference signals, or operating in high interference-to-noise ratio (INR) environments, as demonstrated in Lemmas~1 and~2.

\begin{lemma}
For any estimation error tolerance level \({\epsilon > 0} \), the discrepancy can be asymptotically smaller than \(\epsilon\) almost surely with a large number of samples \(T_s\).
\end{lemma}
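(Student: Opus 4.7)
The plan is to start from the concentration inequality in Theorem~1 and extract an exponential tail bound in $T_s$, then invoke the (first) Borel--Cantelli lemma to upgrade convergence in probability to almost sure convergence.

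First, I would rewrite the bound in Theorem~1 in its complementary form,
\begin{align}
\mathbb{P}\!\left(\left\Vert \hat{\bR}_m - \bR_m \right\Vert_2 \geq \epsilon\right)
\leq 2N_r\exp\!\left(-\frac{\epsilon^2 T_s^2/2}{\sigma_{\mathrm{F}}^2 + 2L_z\epsilon T_s/3}\right).
\end{align}
The key observation is that the expression for $\sigma_{\mathrm{F}}^2$ in \eqref{sigma_F_2_lemma} carries an explicit factor of $T_s$, while $L_z$ and the quantity inside the square root depend only on the channel realization and noise variance, not on $T_s$. Dividing numerator and denominator of the exponent by $T_s$, I would show that the bound reduces to
\begin{align}
\mathbb{P}\!\left(\left\Vert \hat{\bR}_m - \bR_m \right\Vert_2 \geq \epsilon\right)
\leq 2N_r\exp(-c(\epsilon)\,T_s),
\end{align}
for a constant $c(\epsilon) > 0$ depending on $\epsilon$, $\sigma_m^2$, $N_r$, and $\Vert \bg_m\Vert_2$, but independent of $T_s$. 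This already establishes convergence in probability of $\hat{\bR}_m$ to $\bR_m$ in spectral norm.

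To strengthen this to an almost sure statement, I would fix an arbitrary $\epsilon>0$ and consider the series $\sum_{T_s=1}^{\infty} \mathbb{P}(\Vert \hat{\bR}_m - \bR_m\Vert_2 \geq \epsilon)$. Since each term is bounded above by $2N_r e^{-c(\epsilon)T_s}$, the series is dominated by a convergent geometric series and is therefore finite. The Borel--Cantelli lemma then yields
\begin{align}
\mathbb{P}\!\left(\left\Vert \hat{\bR}_m - \bR_m \right\Vert_2 \geq \epsilon \text{ infinitely often}\right) = 0,
\end{align}
so that $\Vert \hat{\bR}_m - \bR_m\Vert_2 < \epsilon$ for all sufficiently large $T_s$ with probability one. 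Finally, intersecting the full-measure events over a countable sequence $\epsilon_k \downarrow 0$ (e.g., $\epsilon_k = 1/k$) gives $\Vert \hat{\bR}_m - \bR_m\Vert_2 \to 0$ almost surely, which is the desired conclusion for arbitrary $\epsilon > 0$.

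I do not anticipate any serious obstacle; the only point that requires care is the clean extraction of the $T_s$-dependence inside the exponent. Since $\sigma_{\mathrm{F}}^2$ is written with a $T_s$ prefactor in \eqref{sigma_F_2_lemma}, one must be careful not to double-count this factor when simplifying the exponent. Once the exponent is correctly identified as asymptotically linear in $T_s$, the Borel--Cantelli argument is entirely routine.
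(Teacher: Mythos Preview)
Your proof plan is correct, and in fact it establishes a stronger conclusion than the paper's own argument. The paper's proof is much more elementary: it simply observes that since $\sigma_{\mathrm{F}}^2$ is linear in $T_s$ and $L_z$ is independent of $T_s$, the exponent $-\frac{\epsilon^2 T_s^2/2}{\sigma_{\mathrm{F}}^2 + 2L_z\epsilon T_s/3}$ diverges to $-\infty$ as $T_s\to\infty$, so the right-hand side of \eqref{eq:Lemma1} tends to $1$. That is, the paper only verifies $\lim_{T_s\to\infty}\mathbb{P}(\Vert\hat{\bR}_m-\bR_m\Vert_2<\epsilon)=1$, which is convergence in probability; the phrase ``almost surely'' in the statement is being used loosely. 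Your Borel--Cantelli argument goes further and delivers genuine almost sure convergence, at the cost of a few extra lines. Both routes hinge on the same observation you flagged---that the exponent is asymptotically linear in $T_s$---so the additional work you do is modest, and your version actually matches the literal wording of the lemma better than the paper's proof does.
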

\begin{proof}
As \( T_s\) approaches infinity to compute the lower bound in \eqref{eq:Lemma1}, we observe that
\begin{equation}
    \lim_{T_s  \rightarrow \infty } \exp\left(-\frac{\epsilon^2T_s^2/2}{\sigma_{\mathrm{F}}^2 + {2L_z\epsilon T_s}/{3}}\right) = 0.
\end{equation}
Since \(\epsilon\) can be independently selected regardless of \(T_s\), we obtain

\begin{equation}
\lim_{T_s \rightarrow \infty}  \mathbb{P}\left(  \left\Vert \hat{\bR}_m - \bR_m \right \Vert_2  < \epsilon  \right) = 1.
\label{discreeqaul0}
\end{equation}
\end{proof}

\begin{lemma}
For any estimation error tolerance level \({\epsilon > 0} \), the discrepancy can be asymptotically smaller than \(\epsilon\) almost surely in the high INR regime.
\end{lemma}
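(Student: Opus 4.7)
The plan is to interpret the high INR regime as $\sigma_m^2 \to 0$ with $\|\bg_m\|_2$ remaining bounded (equivalently, the ratio $\|\bg_m\|_2^2/\sigma_m^2 \to \infty$ driven by shrinking noise), and then to show that, in this limit, both expressions $\sigma_{\mathrm{F}}^2$ and $L_z$ appearing in the concentration bound of Theorem~1 vanish, forcing the exponential term in \eqref{eq:Lemma1} to zero.

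First I would revisit $C_1 = \Vert\bg_m\Vert_2^2 + N_r\sigma_m^2$ and observe that $C_1 \to \Vert\bg_m\Vert_2^2$ as $\sigma_m^2 \to 0$. Substituting this into the definition of $\sigma_{\mathrm{F}}^2$ in \eqref{sigma_F_2_lemma} yields, to leading order,
\begin{align}
\sigma_{\mathrm{F}}^2 &\sim T_s \sigma_m^2 \Vert\bg_m\Vert_2^2 \sqrt{N_r^2 + 3N_r},
\end{align}
which tends to $0$ in the high INR regime. Similarly, since both terms of $L_z \approx 2\sigma_m\sqrt{N_r}\Vert\bg_m\Vert_2 + (N_r-1)\sigma_m^2$ carry at least one factor of $\sigma_m$, we also have $L_z \to 0$. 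Hence the entire denominator $\sigma_{\mathrm{F}}^2 + 2 L_z \epsilon T_s/3$ inside the exponent vanishes while the numerator $\epsilon^2 T_s^2/2$ stays strictly positive for fixed $\epsilon$ and $T_s$.

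Next I would formalize the limit: for any $\epsilon>0$,
\begin{align}
\lim_{\sigma_m^2 \to 0}\exp\!\left(-\frac{\epsilon^2 T_s^2/2}{\sigma_{\mathrm{F}}^2 + 2L_z\epsilon T_s/3}\right) = 0,
\end{align}
and therefore applying Theorem~1 gives
\begin{align}
\lim_{\sigma_m^2 \to 0} \mathbb{P}\!\left(\left\Vert \hat{\bR}_m - \bR_m\right\Vert_2 < \epsilon\right) = 1,
\end{align}
which is the desired almost-sure asymptotic conclusion. This parallels the argument of Lemma~1, only the driving variable is $\sigma_m^2$ rather than $T_s$.

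The main obstacle I anticipate is conceptual rather than technical: I need to pin down exactly what ``high INR regime'' means in a way that is consistent with the derivation. The alternative interpretation, driving $\Vert\bg_m\Vert_2\to\infty$ with $\sigma_m^2$ fixed, would make $\sigma_{\mathrm{F}}^2$ grow like $\Vert\bg_m\Vert_2^2$ and $L_z$ like $\Vert\bg_m\Vert_2$, so the exponent would vanish and the bound would become trivial. Thus I would explicitly state at the outset that high INR is achieved by $\sigma_m^2\to 0$ (the practically meaningful interference-dominant limit motivating the IW scheme), which makes the limit argument clean and matches the narrative preceding the lemma.
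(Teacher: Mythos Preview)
Your proposal is correct and follows essentially the same approach as the paper: both interpret the high INR regime as $\sigma_m \to 0$, show that the denominator $\sigma_{\mathrm{F}}^2 + 2L_z\epsilon T_s/3$ vanishes in this limit (the paper factors it as $\sigma_m^2 C_2 + \tfrac{2}{3}\sigma_m C_3$, you analyze $\sigma_{\mathrm{F}}^2$ and $L_z$ separately), and conclude via Theorem~1. Your additional remark ruling out the $\Vert\bg_m\Vert_2\to\infty$ interpretation is a nice clarification that the paper leaves implicit.
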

\begin{proof}
Using $\sigma^2_\mathrm{F}$ and $L_z$ from \eqref{sigma_F_2_lemma}, we can extend the lower bound in \eqref{eq:Lemma1} as
\begin{align}
\exp\left(-\frac{\epsilon^2T_s^2/2}{\sigma_{\mathrm{F}}^2 + {2L_z\epsilon T_s}/{3}}\right) =\exp\left(-\frac{\epsilon^2T_s^2/2}{\sigma_m^2 C_2 + {\frac{2}{3} \sigma_m } C_3}\right),
\end{align}
where $C_2 = T_s \sqrt{N_r^2 \Vert \bg_m  \Vert_2^4 + 2 C_1N_r  \Vert \bg_m  \Vert_2^2 + C_1^2  N_r}$ and $C_3= \epsilon T_s\left( 2\sqrt{N_r }\Vert \bg_m \Vert_2 +  \left( N_r - 1 \right) \sigma_m \right)$.

To examine the high INR regime, we consider the limit as $\sigma_m$ approaches zero as
\begin{align}
    \lim_{\sigma_m \rightarrow 0} \exp\left(-\frac{\epsilon^2T_s^2/2}{\sigma_m^2 C_2 + {\frac{2}{3} \sigma_m } C_3}\right) = 0. 
\end{align}
Consequently, the discrepancy becomes asymptotically smaller than $\epsilon$ in the high INR regime. 
\end{proof}

\begin{figure*}[!t]
    \centering
    \includegraphics[width=1.85\columnwidth]{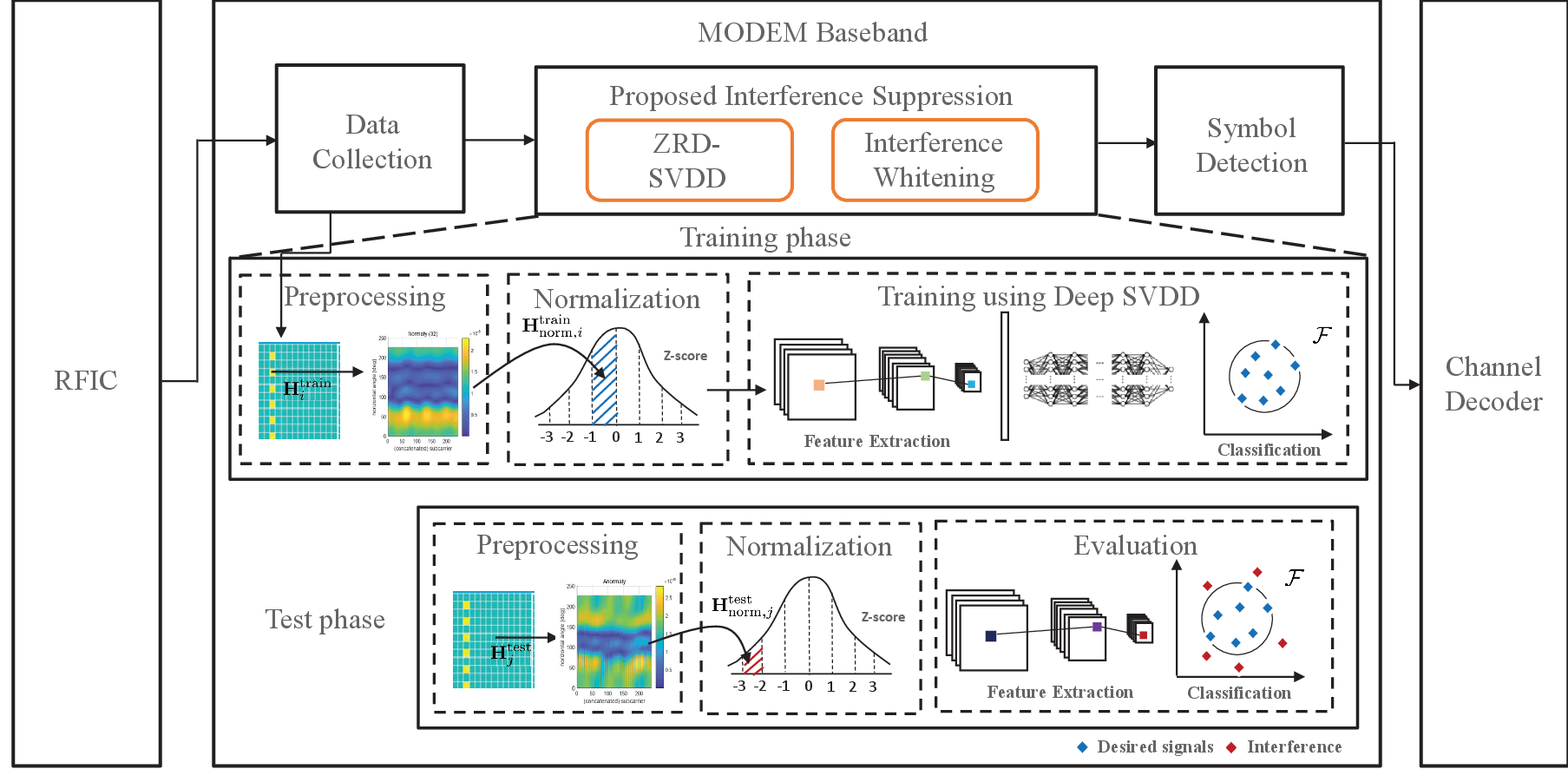}
    \caption{Framework for ZRD-SVDD.}
    \label{fig:proposedframework}
\end{figure*}

In systems with limited reference signals, such as 5G NR, the effectiveness of IW is closely related to the INR. To maximize performance gains, it is essential to apply IW adaptively based on the accurate detection of ICI. This requires overcoming the challenge of precisely identifying ICI, especially in scenarios with constrained time or frequency resources. To address this challenge, we employ a novel interference detection technique that models the ICI detection task as an anomaly detection problem. By leveraging this technique, the proposed scheme enables the adaptive application of IW, even under dynamic and resource-limited conditions. The following section introduces the proposed scheme, which is designed to tackle these challenges. 

\section{Proposed Scheme: Z-Refined Deep SVDD}
In this section, we propose a novel anomaly detection technique, named ZRD-SVDD, which is designed to detect ICI in various environments. ZRD-SVDD not only provides precise detection of strong interference but also improves sensitivity to subtle interference that is barely distinguishable from desired signals. Here, a desired signal is defined as a signal consisting of the signal from the serving gNB and noise, without any, or at least negligible, interference. In the following subsections, we first describe the components of ZRD-SVDD, and then present a detailed explanation of its training and testing phases.

\subsection{ZRD-SVDD components}
\label{sec3.A}
The proposed ZRD-SVDD consists of two main components: Z-score-based normalization and deep learning-based SVDD, as illustrated in Fig.~\ref{fig:proposedframework}. The first component employs the Z-score, a widely used method for data normalization in statistical anomaly detection. The Z-score standardizes the data by subtracting the mean and dividing by the standard deviation as given by
\begin{align} \label{eq:zscore}
    Z=\frac{X- \mu_X}{\sigma_X}
\end{align}
where $\mu_X$ and $\sigma_X$ are the mean and standard deviation of the data $X$, respectively.

The Z-score is computationally efficient, enabling rapid processing of large datasets, which is particularly advantageous for UE-centric approaches in wireless communication systems \cite{pang2019deep}.
By standardizing the data distribution, this normalization enhances interference detection accuracy, providing a consistent threshold for identifying outliers. Consequently, it effectively identifies subtle interference that might be obscured in the desired signal scale.

The second component is the deep learning-based SVDD. This component aims to find the smallest hypersphere in the feature space that encompasses desired signals, enabling the effective detection of complex interference patterns. Specifically, the neural network weights $\mathcal{W}=\left[ {\bW^1, \bW^2, ..., \bW^{N_{\mathrm{layer}}}} \right]$, with $N_{\mathrm{layer}}$ neural network layers, transform the training and test data samples into a new representation in a feature space through a non-linear transformation $\Omega(\cdot;\mathcal{W})$. This transformation enhances the separability between desired signals and anomaly signals such as ICI, by learning feature representations that maximize discrimination.

\subsection{Training phase of ZRD-SVDD} 
\label{sec3.B}
The training phase of the proposed ZRD-SVDD consists of four key steps: initialization for training, desired signal collection, Z-score-based data normalization, and learning the decision boundary through deep learning-based SVDD.

First, the UE leverages the existing 5G NR measurement including RSRP or CSI-IM to determine whether to learn the characteristics of the desired signals. When the RSRP from the serving gNB, $\mathrm{RSRP}_{\mathrm{S}}$, is stronger than that from the neighboring gNBs, $\mathrm{RSRP}_{\mathrm{N}}$, by a threshold for training $\rho_{\mathrm{tr}}$, i.e., $ \mathrm{RSRP}_{\mathrm{S}} - \mathrm{RSRP}_{\mathrm{N}}  \geq \rho_\mathrm{tr}$, the corresponding UE is considered to be in a noise-dominant environment. Additionally, CSI-IM provides another applicable metric for interference assessment. When CSI-IM measurement values fall below a predetermined threshold for training $\gamma_{\mathrm{tr}}$, i.e., $\mathrm{CSI\textendash IM} \leq \gamma_\mathrm{tr}$, the UE initiates the learning process to capture the characteristics of desired signals. This approach employs a semi-supervised learning framework, as it utilizes only desired signals during the training phase to model normal behavior, without requiring explicit labels for interference. 

In noise-dominant environments, a UE collects desired signals from the serving gNB using reference signals such as demodulation reference signal (DMRS). The collected desired signals are preprocessed by concatenating the signals across both angular and frequency domains to form the training data samples $\mathbf{H}^{\mathrm{train}}_{i}$ = $\mathrm{vec}\left( \left[ \mathbf{H}_m \right]_{m=1}^{N_f} \right), i=1,\cdots, N_t$, where $\mathbf{H_m}$ is defined in \eqref{eq:dft}, $N_f$ represents the number of subcarriers, and $N_t$ denotes the number of training data samples in the dataset. These training data samples are normalized using \eqref{eq:zscore} as $\bH^\mathrm{train}_{\mathrm{norm}, i} ={(\bH_i^\mathrm{train} - \mu^{\mathrm{train}})}/{\sigma^{\mathrm{train}}}$, where $\bH^\mathrm{train}_{\mathrm{norm}, i}$ represents the $i$-th normalized training data sample, and $\mu^{\mathrm{train}}$ and $\sigma^{\mathrm{train}}$ refer to the sample mean and standard deviation of the training dataset, respectively.

Using the normalized training data samples $\mathbf{H}^{\mathrm{train}}_{\mathrm{norm},i}, i=1,\cdots, N_t$, the deep SVDD model is trained to effectively separate desired signals from ICI. This is achieved by iteratively adjusting the neural network weights $\mathcal{W}$ to minimize the distance between transformed data and the center of the hypersphere, which decides the boundary between desired signal and ICI as \cite{ruff2018deep}
\begin{align}\label{eq:deepSvddObj}
\min_{\mathcal{W}} \frac{1}{N_t} \sum_{i=1}^{N_t} \Vert \Omega\left( \bH^{\mathrm{train}}_i;\mathcal{W} \right)-\bc \Vert_2^2 +\frac{\zeta}{2} \sum_{\nu=1}^{N_\mathrm{layer}} \Vert \bW^{\nu} \Vert^2_F,
\end{align}
where $\zeta$ and $\bc$ stand for a regularization parameter and the center of the hypersphere in the feature space, respectively. The first term in~\eqref{eq:deepSvddObj} minimizes the distance between the transformed data and the hypersphere center, enabling the model to distinguish between desired signals and ICI. The second term is a regularization term to prevent overfitting and improve generalization, which allows the model to better handle unseen data patterns. Consequently, this component ensures consistent performance across various interference scenarios, including cell center regions and high interference conditions often encountered at cell edges or in dense network deployments.

\begin{algorithm}[t]
\caption{ZRD-SVDD training phase}
\label{algorithm1}
\begin{algorithmic}[1]
\If{$ \mathrm{RSRP}_{\mathrm{S}} - \mathrm{RSRP}_{\mathrm{N}}  \geq \rho_\mathrm{tr}$ or $\mathrm{CSI\textendash IM} \leq \gamma_\mathrm{tr}$}
    \State Initialize $\bc$, $\mathcal{W}$, $\mu^{\mathrm{train}}$, and $\sigma^{\mathrm{train}}$
    \For{$i = 1$ to $N_t$}
        \State Collect $\bH_i^\mathrm{train}$
        \State Update $\mu^{\mathrm{train}}$ and $\sigma^{\mathrm{train}}$ using $\bH_i^\mathrm{train}$
    \EndFor
    \For{$\mathrm{epoch} = 1$ to $100$}
        \For{$i = 1$ to $N_t$}
            \State Generate $\bH^\mathrm{train}_{\mathrm{norm}, i}$ using \eqref{eq:zscore}
            \State Update $\mathcal{W}$ by solving \eqref{eq:deepSvddObj}
        \EndFor
        \State Update $\bc$ as the mean of $\Omega(\bH^\mathrm{train}_{\mathrm{norm}, i}; \mathcal{W})$ 
        \State Save the updated weights: $\mathcal{W}^{\star} \gets \mathcal{W}$
    \EndFor
    \State Set $\bc_{\mathrm{tr}} \gets \bc$
    \State \Return $\mathcal{W}^{\star}, \bc_{\mathrm{tr}}$
\EndIf
\end{algorithmic}
\end{algorithm}

This training process terminates when a predefined number of epochs is reached.
During the training process, the center of the hypersphere $\bc$ is initialized to a predefined value and iteratively updated until convergence. Upon the completion of training, it is finalized as $\bc_{\mathrm{tr}}$, representing the center of the learned hypersphere for distinguishing desired signals from ICI. Simultaneously, the optimal neural network weights $\mathcal{W}^{\star}$ are obtained by minimizing the objective function in \eqref{eq:deepSvddObj}. These outputs are crucial for the test phase, where they are used to determine whether test data correspond to desired signals or ICI.
The entire training phase is summarized in Algorithm~\ref{algorithm1}. Notably, this training phase can be performed periodically or triggered by significant changes in network conditions, enabling the model to adapt to evolving channel environments and maintain its effectiveness in detecting ICI.

\begin{algorithm}[t]
\caption{ZRD-SVDD test phase}
\label{algorithm2}
\begin{algorithmic}[1]
\If{$\vert \mathrm{RSRP}_{\mathrm{S}} - \mathrm{RSRP}_{\mathrm{N}} \vert < \rho_\mathrm{te}$ or $\mathrm{CSI\textendash IM} > \gamma_\mathrm{te}$}
    \For{j = 1 to $N_v$}
    \State Collect $\bH^\mathrm{test}_j$
    \State Generate $\bH^\mathrm{test}_{\mathrm{norm},j}$ by $\bH^\mathrm{test}_j$ using \eqref{eq:zscore}
    \State Compute $f_{\mathrm{zrd}}$ using (\ref{eq:deepSvdddecsion})
        \If{$f_{\mathrm{zrd}}\left(\bH^\mathrm{test}_{\mathrm{norm},j} \right) > \Theta$}
            \State Enable IW for $\bH^\mathrm{test}_j$ using \eqref{eq:whitendR}
        \Else
            \State Disable IW for $\bH^\mathrm{test}_j$
        \EndIf
    \EndFor
\EndIf
\end{algorithmic}
\end{algorithm}

\subsection{Test phase of ZRD-SVDD} 
\label{sec3.C}
The test phase of the proposed ZRD-SVDD is initiated when a UE potentially encounters ICI. This phase consists of several key steps including initialization for testing, test data normalization using Z-score, ICI detection through the trained deep SVDD, and adaptive operation of the IW based on the presence of ICI. 

To initialize the test phase, the UE utilizes the same metrics as in the training phase, i.e.,  RSRP or CSI-IM. The test phase is triggered when the RSRP difference between the serving gNB and neighboring gNBs falls below a predefined threshold for testing $\rho_{\mathrm{te}}$, i.e., $\vert \mathrm{RSRP}_{\mathrm{S}} - \mathrm{RSRP}_{\mathrm{N}} \vert < \rho_\mathrm{te}$ or when the CSI-IM measurement exceeds the threshold for testing $\gamma_{\mathrm{te}}$, i.e., $\mathrm{CSI\textendash IM} > \gamma_\mathrm{te}$. These conditions indicate that the UE is located in regions susceptible to ICI, such as cell edges or within dense network deployments.

In such environments with potential interference, the UE obtains the $j$-th test data sample $\bH^{\mathrm{test}}_j, j=1,\cdots, N_v $ from the data region, e.g., PDSCH, where $N_v$ denotes the number of test data samples, which are typically utilized for validation purposes. Similar to the training phase, these samples undergo preprocessing. The preprocessed test data samples are then normalized using the sample mean $\mu^{\mathrm{test}}_j$ and the sample standard deviation $\sigma^{\mathrm{test}}_j$ for each test data sample.{\footnote{Unlike the training phase, where the sample mean and standard deviation are obtained over the entire dataset, the test phase computes these statistics for each $j$-th test data sample individually.} Using \eqref{eq:zscore}, the $j$-th test data sample $\bH^\mathrm{test}_j$ is transformed into the normalized test data sample $\bH^\mathrm{test}_{\mathrm{norm},j}$. When the test data sample is subject to significant ICI, the corresponding normalized test data sample tends to fall in the tails of the normal distribution due to statistical differences from desired signals. The trained deep SVDD model maps $\bH^{\mathrm{test}}_{\mathrm{norm}, j}$ to the feature space, emphasizing the difference between desired signal and ICI. The presence of ICI is determined through the following decision function
\begin{align}\label{eq:deepSvdddecsion}
f_{\mathrm{zrd}}\left( \bH^{\mathrm{test}}_{\mathrm{norm}, j} \right) =\Vert \Omega\left(\bH^{\mathrm{test}}_{\mathrm{norm}, j};\mathcal{W}^{\star} \right) - \bc_{\mathrm{tr}}  \Vert_2^2.
\end{align}
This function measures the squared Euclidean distance between the mapped test data sample and the hypersphere center, where a distance exceeding a predefined threshold $\Theta$ indicates the presence of ICI. 

Following the interference evaluation, the IW is enabled only if ICI is detected, and remains disabled in the absence of ICI. The entire test phase is summarized in Algorithm~\ref{algorithm2}.

\begin{figure}[!t]
    \centering
    \includegraphics[width=\columnwidth]{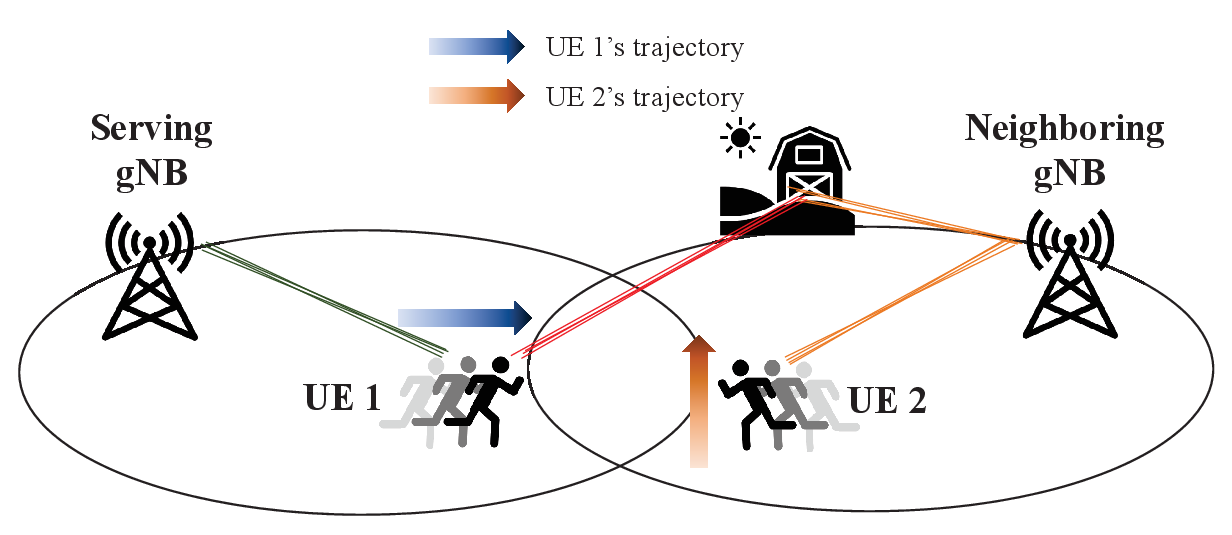}
    \caption{Experimental scenario for the case of $N_c = 1$.}
    \label{fig:experimentalScenario}
\end{figure}

\section{Performance Evaluation}
In this section, we evaluate the proposed ZRD-SVDD-based IW scheme through numerical simulations and TE experiments. Section~\ref{subsec: IV-A} details the experimental environments and datasets used for evaluation. Section~\ref{subsec: IV-B} introduces the baselines, and Section~\ref{subsec: IV-C} presents the complexity analysis including the proposed scheme and baselines. Section~\ref{subsec: IV-D} outlines the performance metrics employed for comparison. Finally, Section~\ref{subsec: IV-E} analyzes the experimental results and demonstrates the effectiveness of the proposed scheme.

\subsection{Experimental environments and datasets}
\label{subsec: IV-A}
\subsubsection{Numerical simulations} We consider the scenario in Fig.~\ref{fig:experimentalScenario}, where gNBs in each cell, with a radius of 40 $\mathrm{m}$, operate on the same frequency and provide service to their respective UE, with $N_c = 1$ neighboring cell considered.\footnote{For simplicity, this study considers a two-cell model with $N_c = 1$. However, we verified through other simulations that analogous trends are observed in a more general case with $N_c = 2$.} The signal received by the UE~1 from the neighboring gNB acts as ICI. In this scenario, the UE~1 and the UE~2 travel at a speed of 3 $\mathrm{m/s}$ in horizontal and vertical directions, respectively, with a total moving distance of 20 $\mathrm{m}$ for each UE. The UE~1 and the UE~2 move toward each other, meet at the midpoint, and then continue moving away from each other. The downlink channel data for both UEs are obtained every 0.1 $\mathrm{m}$, resulting in a total of 200 channel data samples at different locations for each UE.

We assume that both gNBs employ the singular value decomposition-based beamforming \cite{choi2024wmmse}. While this beamforming technique provides high throughput by concentrating the signal on the intended UE, it can lead to severe performance loss due to interference when the UE~2 moves toward the UE~1. For our simulations, the channel data samples are categorized into two distinct environments. Among 200 channel data samples, the indices 1-50 and 151-200 are considered to be in a noise-dominant environment, while the indices 51-150 represent an interference-dominant environment.
It is important to note that not all channel data samples with the indices 51-150 have been corrupted with the interference from the neighboring gNB since it may not exploit the same frequency resource with the serving gNB due to various factors, e.g., UE scheduling. To effectively simulate interference patterns within this environment, the indices 61-68, 86-90, and 121-123 are intentionally subjected to ICI. These samples should be classified as outliers that the UE needs to detect and mitigate. This assumption facilitates a realistic context for anomaly detection, including both interfered and non-interfered samples within the test dataset. Consequently, the channel data samples from the noise-dominant environment serve as the training dataset, while those from a potentially interference-prone environment are utilized as the test dataset.

\begin{table}[t]
\centering
    \caption{Parameters for simulations and TE experiments}
    \label{tab:SimulationParams}
\begin{tabular}{clcc}
\hline
\multicolumn{1}{c}{\textbf{Type}} & \multicolumn{1}{c}{\textbf{Parameter}} & \multicolumn{1}{c}{\textbf{Value}} & \multicolumn{1}{c}{\textbf{Unit}}  \\ \hline
& Carrier Frequency & 28 & $\mathrm{GHz}$  \\ 
& Sampling Frequency          &  122.88 & $\mathrm{MHz}$         \\ 
& Sub-carrier Spacing          &  120 & $\mathrm{kHz}$         \\ 
& OFDM Symbol          &  1024 & -        \\ 
Numerical& UE Mobility          &  3 & $\mathrm{m/s}$         \\ 
Simulation& Resource Blocks          &  20 & -         \\ 
& gNB Antennas            &  8 & -      \\ 
& UE Antennas           &  4 & -        \\ 
& Channel Clusters         &  4 & -        \\ 
& Paths per Cluster           &  5  & -       \\ \hline
& Carrier Frequency       &  3.3 & $\mathrm{GHz}$         \\ 
& Sub-carrier Spacing          &  30 & $\mathrm{kHz}$         \\
& Resource Blocks          &  273 & -         \\
TE& gNB Antennas           &  4 & -         \\
Experiment& UE Antennas          &  1 & -         \\
& MCS          &  3 & -         \\
& Rank          &  1 & -         \\
& Channel Models         &  TDL-A, -B, -C & -         \\ \hline
Common & Noise Spectral Density & -174 & $\mathrm{dBm/Hz}$ \\ \hline
\end{tabular}
\end{table}

Each channel data sample in the training and test datasets consists of angular and spectral domains. The angular domain corresponds to the number of receive antennas, while the spectral domain concatenates the reference signals within the allocated frequency resources.
To illustrate, we consider a scenario with 200 total channel samples in the time domain, where each of the aforementioned training and test datasets contains 100 samples, i.e., $N_t=N_v=100$ for each dataset. The gNB allocates 20 resource blocks (RBs), where each RB fully utilizes its frequency domain for reference signals, designated by $N_f = 12$ per RB. In this case, a single channel sample comprises $20 \times 12$ frequency resources per antenna, which are observed over 100 time instances per dataset. This structure captures the channel attributes across both frequency and time dimensions for each receive antenna. Simulation parameters for channel data generation are summarized in Table~\ref{tab:SimulationParams}.

\subsubsection{Experiments using TE}
We implement a realistic experimental setup using 5G NR TE to validate our proposed scheme. The experiment evaluates the performance of the ZRD-SVDD-based IW scheme under 5G NR conditions, comparing it with IW always-on and -off schemes.
This test environment consists of Anritsu MT8000A, a 5G base station emulator, and Samsung Exynos 5400 modem chipset for the UE as shown in Fig. \ref{fig:testEnvironment} \cite{MT8000A}.
We employ TDL channel models, specifically, TDL-A, -B, and -C \cite{3gpp.38.901}. These channel models simulate various multipath fading scenarios, which allow a comprehensive evaluation of the ZRD-SVDD-based IW scheme under different channel conditions.
Our experiment configurations are detailed in Table~\ref{tab:SimulationParams}. The modulation and coding scheme (MCS) is fixed to 3, and the rank\footnote{The definition of MCS and rank can be found in \cite{3gpp.38.214}.} is set to 1, emulating typical cell-edge conditions where the ICI is prevalent.
This experiment bridges the gap between theoretical simulations and practical implementations, providing insights into the performance of our proposed scheme in real-world 5G NR environments.

\begin{figure}[!t]
    \centering
    \includegraphics[width=.90\columnwidth]{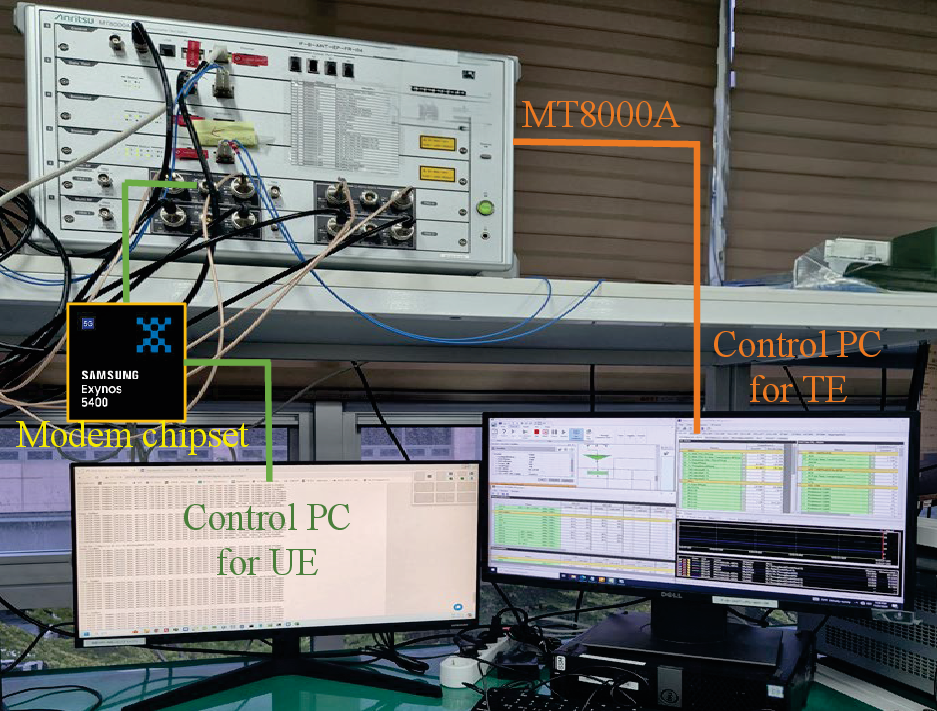}
    \caption{Environment for the TE experiments.}
    \label{fig:testEnvironment}
\end{figure}

\subsection{Baselines}
\label{subsec: IV-B}
To compare the performance of the proposed ZRD-SVDD, we employ OC-SVM and $k$-NN as baselines, which are well-established OCC algorithms commonly used for anomaly detection. OC-SVM constructs a hyperplane to separate normal data from anomalies in a high-dimensional feature space \cite{scholkopf2001estimating}. While primarily designed for multi-class classification, $k$-NN can be adapted for one-class classification in anomaly detection tasks by identifying anomalies based on the local density of data points in the feature space \cite{khan2018relationship}. In the context of $k$-NN, a small value of $k$ may lead to false alarm due to minor variations, whereas a large value of $k$ improves robustness by averaging across more samples.

Additionally, we establish baselines for the proposed ZRD-SVDD-based IW using conventional IW schemes, including IW always-on, IW always-off, and ideal genie-aided IW. The IW always-on scheme applies the whitening process regardless of ICI, while the IW always-off scheme consistently disables IW under all conditions. The genie-aided IW scheme assumes exact knowledge of whether ICI is present or not and adapts the operation of IW accordingly. This scheme serves as the performance upper bound.

\subsection{Complexity analysis}
\label{subsec: IV-C}

The training complexity of OC-SVM is dominated by solving a quadratic programming problem to construct the decision boundary, leading to a complexity of $\mathcal{O}(N_{t}^{3})$ \cite{scholkopf2001estimating}. During testing, OC-SVM computes the distance of each test sample from the hypersphere center, resulting in a test complexity of $\mathcal{O}(N_{v} \cdot F)$, where $F$ denotes the feature dimensionality of the data, i.e., $F \approx N_f * N_r$. 

For $k$-NN, the training phase involves storing the extracted features, with a complexity of $\mathcal{O}(N_{t} \cdot F)$. However, testing requires comparing each test sample with all training samples, yielding a test complexity of $\mathcal{O}(N_{t} \cdot N_{v} \cdot F)$, which makes $k$-NN less efficient for large-scale datasets.

ZRD-SVDD incorporates feature extraction and iterative optimization. Training complexity is $\mathcal{O}(E \cdot N_{t} (N_f \cdot N_r) + F )$, where $E$ is the number of epochs. Testing involves feature extraction and computing distances from the hypersphere center, resulting in $\mathcal{O}(N_{v} (N_f \cdot N_r) + F )$. Z-score normalization involves simple arithmetic operations and thus does not significantly impact the overall complexity. The proposed scheme offers a balanced approach to training and testing complexity. While OC-SVM suffers from cubic training complexity, making it impractical for large datasets, and $k$-NN requires exhaustive pairwise comparisons during testing, the proposed ZRD-SVDD achieves efficient training through iterative optimization and reduces testing complexity compared to $k$-NN, making it well-suited for large-scale and high-dimensional data.

In addition, we measure the time complexities of the proposed ZRD-SVDD and baseline schemes under the experimental environment described in Table~\ref{tab:experimental_environment}. The corresponding results are presented in Table~\ref{tab:time_comparison}. While the proposed ZRD-SVDD requires a slightly longer training phase due to the optimization of the hypersphere and neural network weights, it demonstrates a significant reduction in test time compared to the baselines. This result underscores its suitability for practical communication systems where rapid decision-making is critical.

\begin{table}[t]
\centering
\caption{Experimental environment for time complexity}
\begin{tabular}{|l|l|}
\hline
\textbf{Component}    & \textbf{Specification}                                                   \\ \hline
GPU Model             & NVIDIA GeForce RTX 3050                                                  \\ \hline
CUDA Version          & 11.3                                                                     \\ \hline
cuDNN Version         & 8302                                                                     \\ \hline
CPU                   & Intel Core™ i9-12900K                     \\ \hline
Cores / Threads       & 12 Cores, 20 Threads                                                     \\ \hline
RAM Size              & 31.84 GB                                                                \\ \hline
PyTorch Version       & 1.12.1                                                                   \\ \hline
\end{tabular}
\label{tab:experimental_environment}
\end{table}
\begin{table}[t]
\centering
\caption{Comparison of training and test times}
\label{tab:time_comparison}
\begin{tabular}{lcc}
\toprule
\textbf{Model}     & \textbf{Training Time (sec)} & \textbf{Test Time (sec)} \\
\midrule
\textbf{Proposed scheme} & 0.4812                 & 0.0352              \\
\textbf{OC-SVM}          & 0.0335                 & 0.1607              \\
\textbf{$k$-NN}            & 0.0020                 & 1.0018              \\
\bottomrule
\end{tabular}
\end{table}

\subsection{Performance metrics}
\label{subsec: IV-D}
The confusion matrix is a widely used metric for evaluating classification performance. It comprises the following sub-metrics, accounting for all possible combinations of predictions and actual values.

\begin{figure}[!t]
    \centering
    \includegraphics[width=\columnwidth]{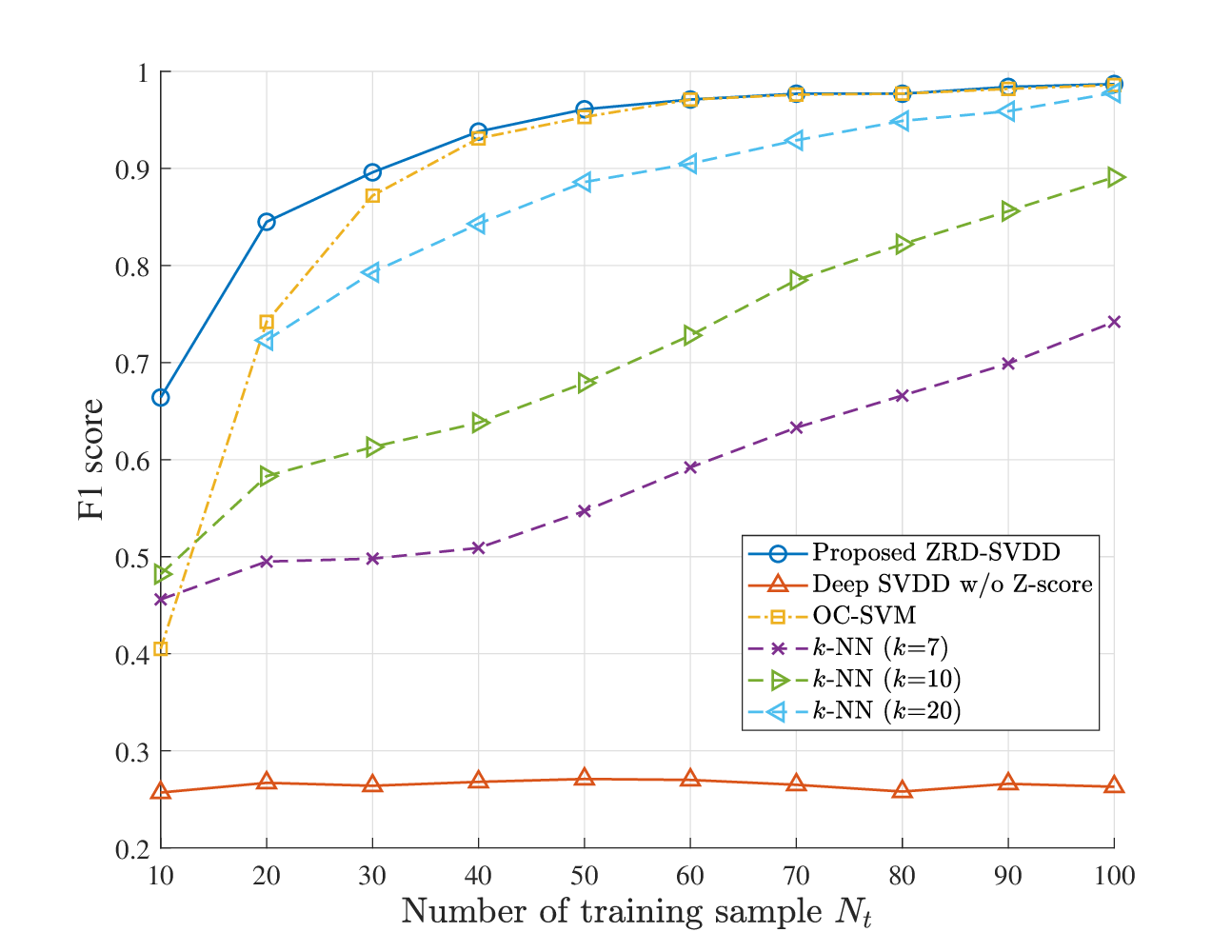}
    \caption{F1 score vs. number of training samples $N_t$ in time domain with $N_f = 12$ and $P_{\mathrm{S}} = P_{\mathrm{I}} = 30$ dBm.}
    \label{fig:5 F1 Score time}
\end{figure}

\begin{itemize}
    \item True positives (TP): These are cases where anomalies are accurately identified, which in this paper means correctly detecting ICI.
    \item False positives (FP): These are cases where normal conditions are incorrectly predicted as an anomaly. In the context of this paper, it refers to instances where ICI is falsely identified despite its absence, i.e., false alarms regarding ICI.
    \item False negatives (FN): These are cases where anomalies are incorrectly classified as normal. In this paper, this corresponds to the missed detection of ICI.
    \item True negatives (TN): These are cases where normal conditions are correctly classified as normal.
\end{itemize}
Rather than directly using these sub-metrics, we employ the following metrics for evaluation: sensitivity, precision, and F1 score.

The sensitivity measures the fraction of correctly identified anomalies as 
\begin{align}\label{eq:Sen}
\mathrm{Sensitivity} = \frac{\mathrm{TP}}{\mathrm{TP}+\mathrm{FN}}.
\end{align}
Since FP does not impact sensitivity, predicting all samples as anomalies would result in a perfect sensitivity score of one. For this reason, sensitivity is generally not used as a standalone metric.

The precision assesses the proportion of predicted anomalies that are true anomalies and is represented as
\begin{align}\label{eq:Precision}
\mathrm{Precision} = \frac{\mathrm{TP}}{\mathrm{TP}+\mathrm{FP}}.
\end{align}
Similar to sensitivity, precision is not solely used since it does not consider FN, which becomes problematic when all decisions are predicted as normal.

For accurate evaluation, the F1 score is calculated as the harmonic mean of sensitivity and precision. It is expressed as follows
\begin{align}\label{eq:f1score}
\mathrm{F1\enspace score} = \frac{2 \times \mathrm{Precision} \times \mathrm{Sensitivity}}{\mathrm{Precision}+\mathrm{Sensitivity}}.
\end{align}
The F1 score balances sensitivity and precision and is particularly useful for evaluating classification performance in scenarios with imbalanced distribution between normal and anomaly data.

In addition to the confusion matrix-derived metrics that evaluate detection performance, we employ three additional performance metrics to evaluate our proposed scheme: symbol error rate (SER), PDSCH block error rate (PDSCH BLER), and downlink throughput. SER measures the ratio of incorrectly decoded symbols to the total number of transmitted symbols \cite{choi2015quantized}. PDSCH BLER represents the ratio of erroneously received transport blocks to the total number of transmitted transport blocks. Downlink throughput quantifies the amount of data successfully transmitted from the gNB to the UE per unit of time, typically measured in bits per second, and serves as an indicator of overall system efficiency \cite{baknina2020adaptive}. While SER is primarily used in our numerical simulations, PDSCH BLER and downlink throughput are utilized in our TE experiments to evaluate real-world performance.
\begin{figure}[!t]
    \centering
    \includegraphics[width=\columnwidth]{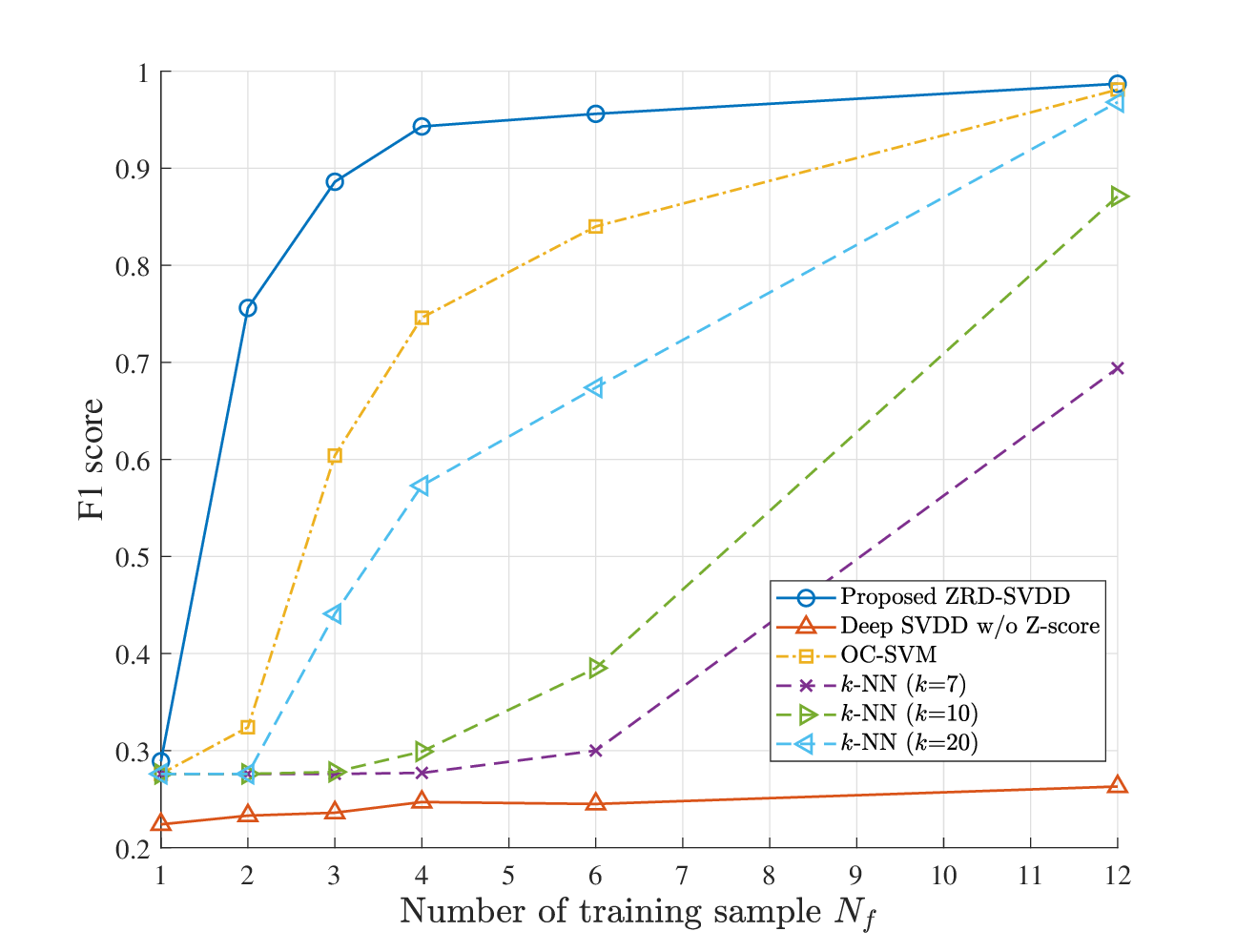}
    \caption{F1 score vs. number of training samples $N_f$ in frequency domain with $N_t = 100$ and $P_{\mathrm{S}} = P_{\mathrm{I}} = 30$ dBm.}    
    \label{fig:6 F1 Score Freq}
\end{figure}
\subsection{Experimental results}
\label{subsec: IV-E}
\label{sec5}

We evaluate the performance of the proposed ZRD-SVDD by varying three key parameters: the number of frequency resources $N_f$ and time resources $N_t$ used for training, as well as the transmit power of the serving gNB $P_{\mathrm{S}}$ and neighboring gNB $P_{\mathrm{I}}$. 

We first evaluate the interference detection performance of the proposed ZRD-SVDD under limited time resources to demonstrate its learning efficiency. The impact of the time domain resources on performance is investigated by progressively reducing the total training samples from 100 in decrements of 10. To prevent overfitting to a specific channel and ensure diversity in the training data, we randomly select subsets of samples from the original training dataset with 100 samples.

As shown in Fig. \ref{fig:5 F1 Score time}, the proposed ZRD-SVDD consistently outperforms the baselines in terms of the F1 score across various numbers of training samples in the time domain. Even when the number of training samples is reduced to 30, the ZRD-SVDD maintains the highest F1 score of approximately 0.84. This robustness to limited training samples is particularly significant in real-world applications where rapid decision-making is critical.
Furthermore, the performance of deep SVDD without Z-score normalization is significantly degraded compared to the proposed ZRD-SVDD. The lower performance remains consistent regardless of the number of training samples $N_t$. These results demonstrate that Z-score normalization enhances the accuracy of interference detection by establishing a more consistent threshold for identifying anomalies.
\begin{figure}[!t]
    \centering
    \includegraphics[width=\columnwidth]{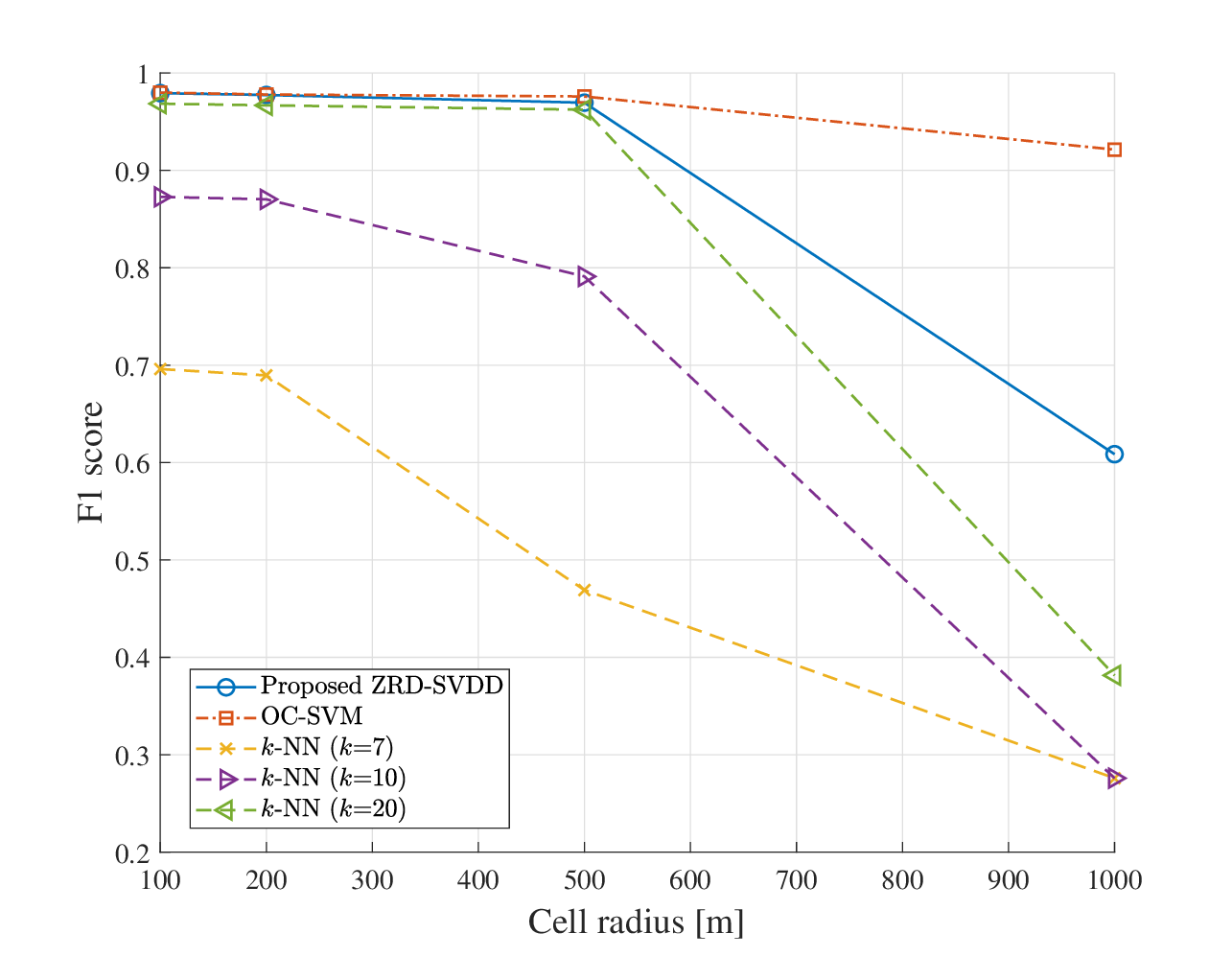}
    \caption{F1 score vs. cell radius with $N_t = 100$, $N_f = 12$ and $P_{\mathrm{S}} = P_{\mathrm{I}} = 46$ dBm.}
    \label{fig:f1score_tp}
\end{figure}

Fig.~\ref{fig:6 F1 Score Freq} evaluates the performance for different numbers of training samples $N_f$ in the frequency domain. This simulation considers 1, 2, 3, 4, 6, and 12 training samples per RB. In 5G NR, one RB consists of 12 resource elements (REs), thus utilizing 12 training samples corresponds to fully utilizing one RB from a frequency perspective. For configurations with fewer than 12 training samples, the training samples are evenly spaced at intervals of $12/N_f$ across the entire RB. Interpolation is employed for data acquisition between training samples, and extrapolation is applied for areas beyond these training samples.
In this simulation, the proposed ZRD-SVDD demonstrates superior performance compared to baselines across all frequency resource configurations. Even with a typical 5G NR configuration using 6 DMRS REs, the proposed ZRD-SVDD achieves an F1 score of approximately 0.96, showing at least 15\% improvement over the baselines. This result clearly shows that our proposed ZRD-SVDD scheme demonstrates the ability to provide accurate interference detection even with sparse frequency sampling.
\begin{figure}[!t]
    \centering
    \includegraphics[width=\columnwidth]{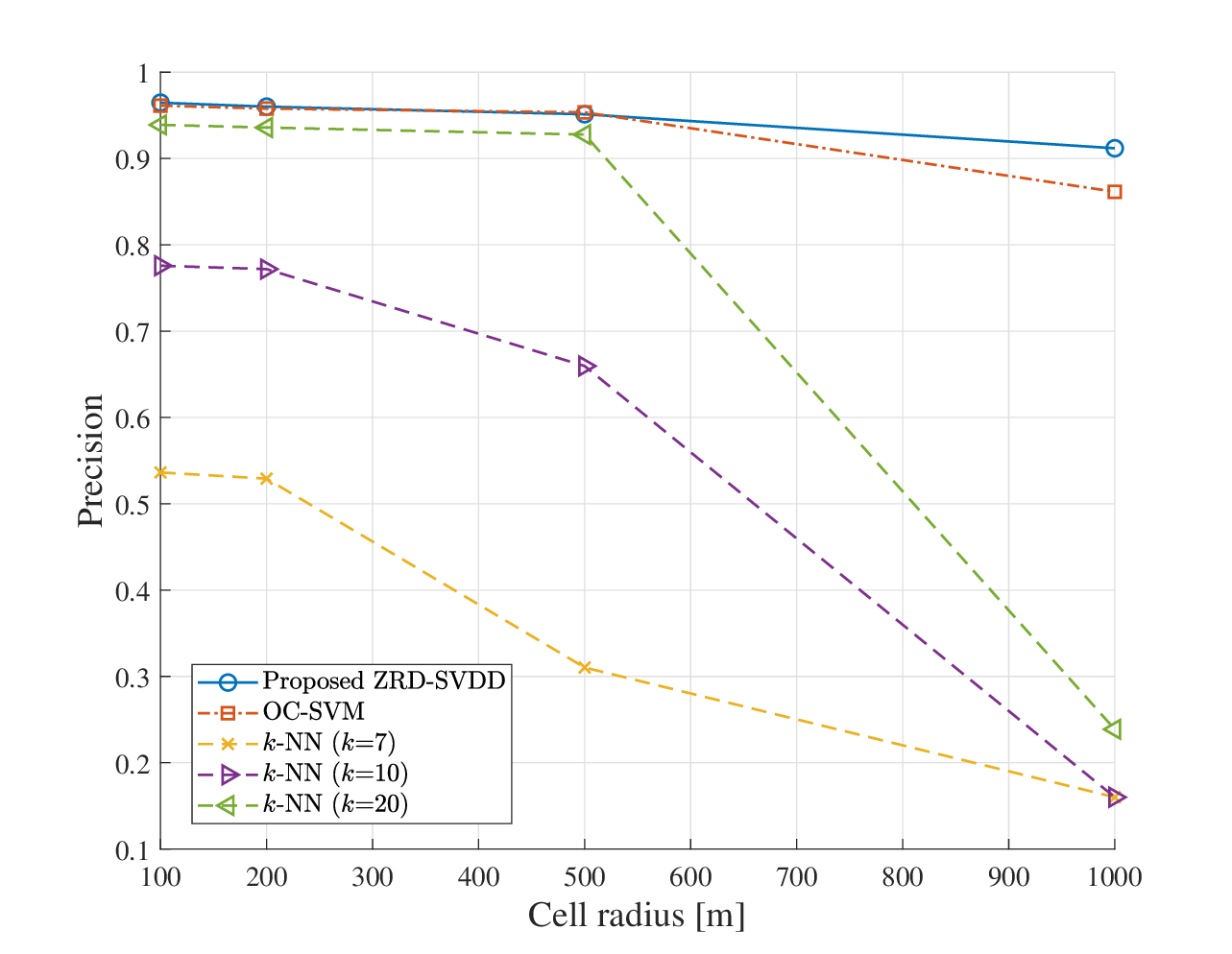}
    \caption{Precision vs. cell radius with $N_t = 100$, $N_f = 12$ and $P_{\mathrm{S}} = P_{\mathrm{I}} = 46$ dBm.}
    \label{fig:precision_tp}
\end{figure}

\begin{figure}[!t]
    \centering
    \includegraphics[width=\columnwidth]{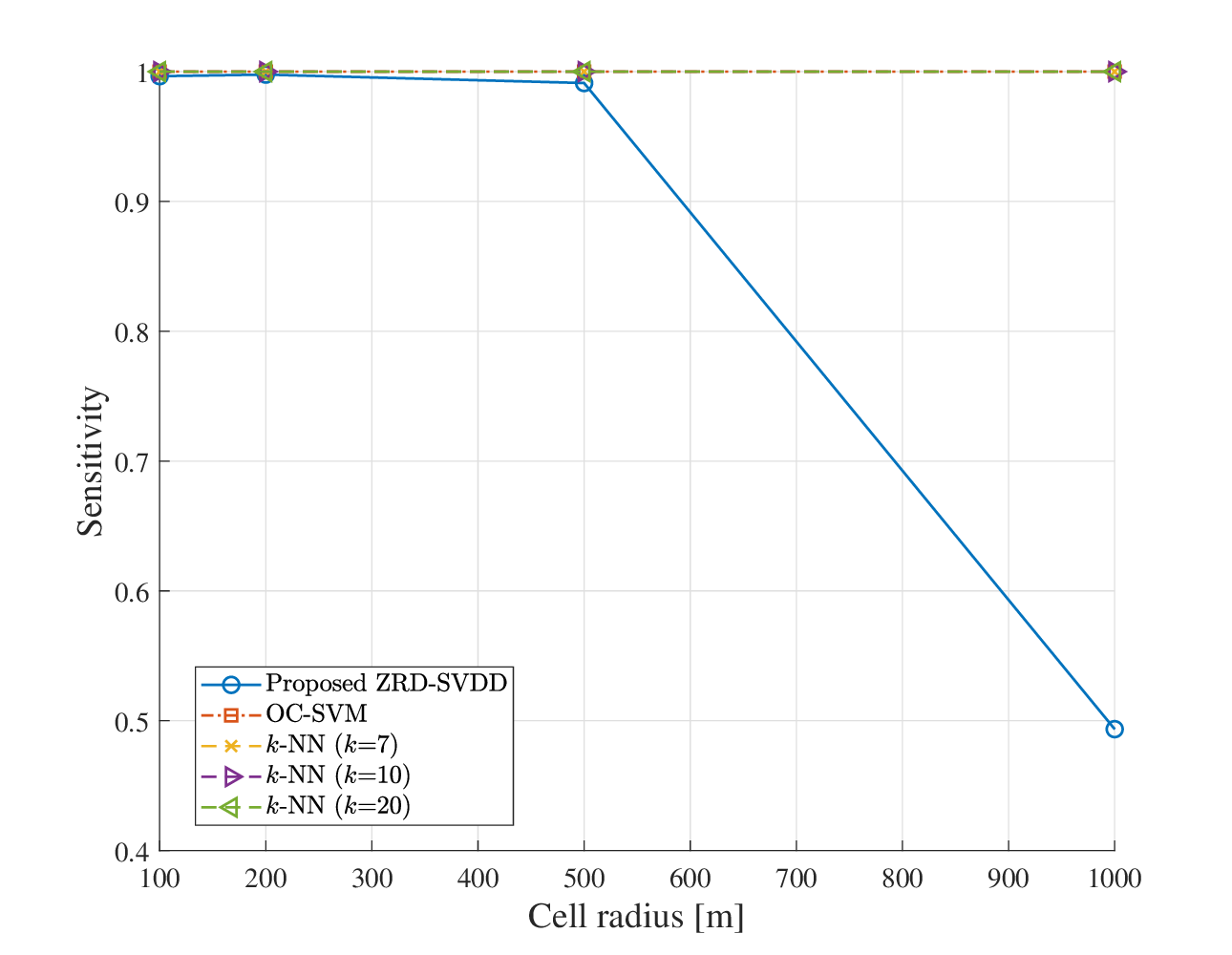}
    \caption{Sensitivity vs. cell radius with $N_t = 100$, $N_f = 12$ and $P_{\mathrm{S}} = P_{\mathrm{I}} = 46$ dBm.}
    \label{fig:sensitivity_tp}
\end{figure}
In the following three plots, we evaluate the performance of the proposed ZRD-SVDD by varying the cell radius from 100 $\mathrm{m}$ to 1 $\mathrm{km}$ with fixed transmit powers of 46 $\mathrm{dBm}$ for both the serving and neighboring gNBs. This configuration enables the assessment of performance across different cellular environments, ranging from dense small-cell deployments to expansive macro-cell coverage areas. With constant transmit powers, variations in cell radius affect the signal-to-interference-plus-noise ratio (SINR), which directly impacts interference detection capabilities. This simulation provides insights into the advantages of the proposed ZRD-SVDD in various cell sizes, particularly in noise-dominant environments, e.g., larger cell radii, where conventional IW schemes typically exhibit limited effectiveness.

Fig.~\ref{fig:f1score_tp} shows the F1 score as a function of cell radius under fixed time and frequency resources, i.e., $N_t = 100$ and $N_f = 12$. In environments with significant interference ranging from 100 $\mathrm{m}$ to 500 $\mathrm{m}$ cell radii, the proposed ZRD-SVDD exhibits performance comparable to certain baselines such as OC-SVM and $k$-NN with $k=20$. However, in macro cell environments with weak interference, the F1 score of the proposed scheme is lower than that of OC-SVM. To further analyze this result, we examine the two components of the F1 score: precision and sensitivity.

Figs.~\ref{fig:precision_tp} and \ref{fig:sensitivity_tp} present the precision and sensitivity performance under the same simulation conditions as in Fig.~\ref{fig:f1score_tp}. The proposed ZRD-SVDD demonstrates superior precision compared to the baselines but experiences a significant drop in sensitivity as the cell radius approaches 1 $\mathrm{km}$. This degradation in sensitivity is the primary factor contributing to the reduction in the F1 score, indicating a substantial increase in the rate of missed detections by ZRD-SVDD in larger cells. However, this missed detection is advantageous in practice since the proposed ZRD-SVDD predominantly treats weak interference as noise and tends to exhibit IW-off behavior for weak interference environments.
In contrast, $k$-NN suffers from a significant decrease in precision in weak interference environments, failing to properly detect interference. OC-SVM accurately detects most instances of interference but maintains consistently high sensitivity across all cell sizes. This results in an increased rate of false alarms in weak interference environments, misclassifying desired signals as interference and frequently triggering IW-on behavior. This analysis highlights the unique ability of ZRD-SVDD to effectively manage weak interference environments, distinguishing it from baselines that tend to overcompensate in such environments.

\begin{figure}[!t]
    \centering
    \includegraphics[width=\columnwidth]{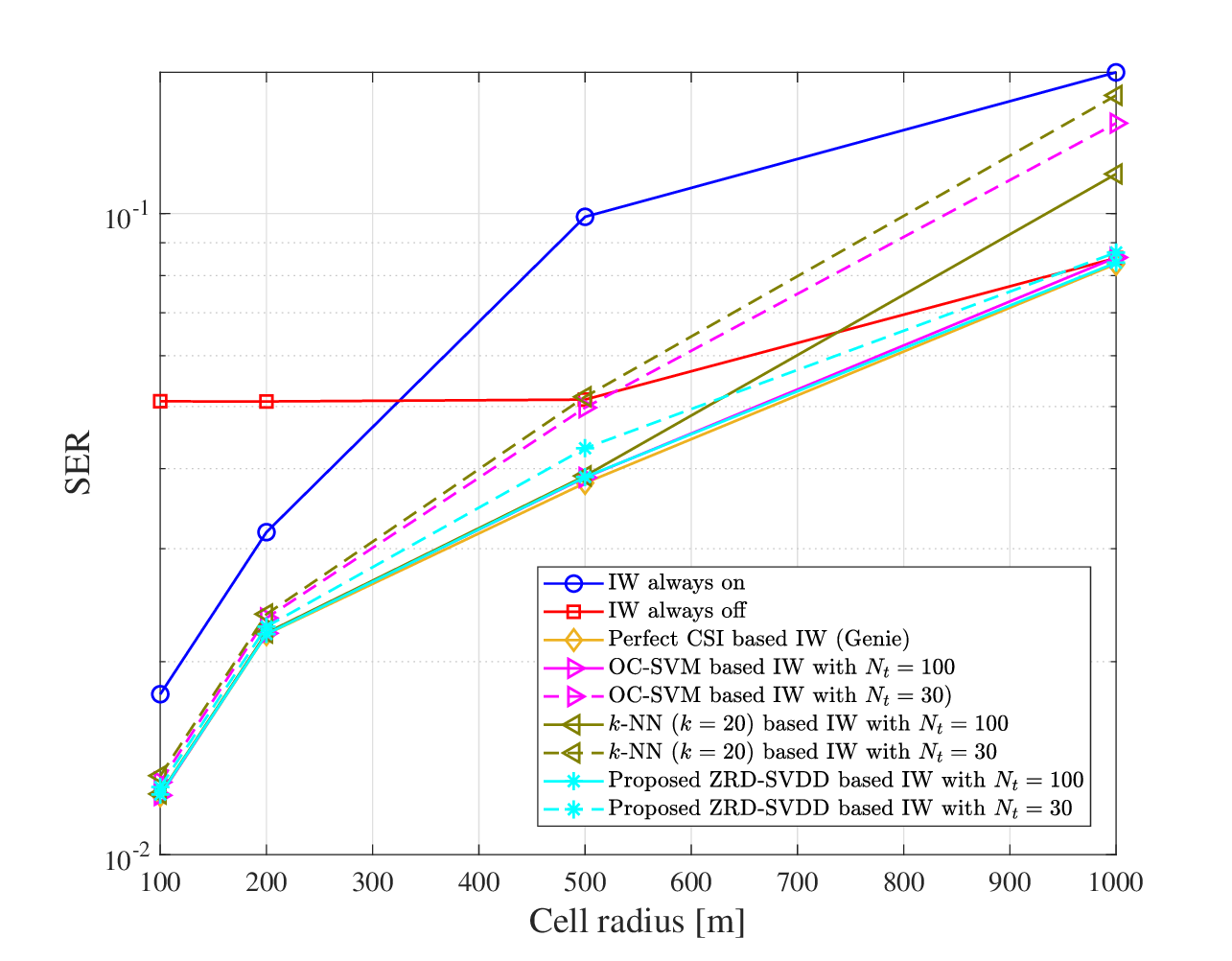}
    \caption{SER vs. cell radius: Comparison of $N_t=30$ and $N_t=100$ with $N_f=12$, $P_{\mathrm{S}} = P_{\mathrm{I}} = 46$ dBm.}
    \label{fig:ser timedomain}
\end{figure}

Figs.~\ref{fig:ser timedomain}~and~\ref{fig:ser freqdomain} illustrate the SER performance of the ZRD-SVDD-based IW scheme as a function of cell radius under limited time or frequency resources. For comparative analysis, we present the IW always-on and -off schemes, the genie-aided IW scheme, and the IW schemes based on baselines.  
As highlighted in Section~\ref{sec2.B}, while the IW always-on scheme can yield performance gains in strong interference environments, it may degrade the performance in weak interference environments due to limited time or frequency resources.

Fig. \ref{fig:ser timedomain} demonstrates the SER performance of the ZRD-SVDD-based IW scheme for $N_t=30$ and $100$ cases. In environments with cell radii between 100 and 200 $\mathrm{m}$, i.e., strong interference environments, all schemes exhibit performance equivalent to the genie-aided IW scheme. However, in weak interference environments, such as at the edge of macro cells, only the proposed scheme maintains performance comparable to the genie-aided IW scheme. This behavior is attributed to its similarity to the IW always-off scheme in weak interference environments.
Among the baselines, the OC-SVM-based IW scheme matches the performance of the genie-aided IW scheme across all regions when using 100 training samples. However, with only 30 training samples, the performance gap becomes significant. Our proposed scheme maintains its effectiveness, while the performance of the OC-SVM-based IW scheme deteriorates, eventually approaching that of the IW always-on scheme.

\begin{figure}[!t]
    \centering
    \includegraphics[width=.98\columnwidth]{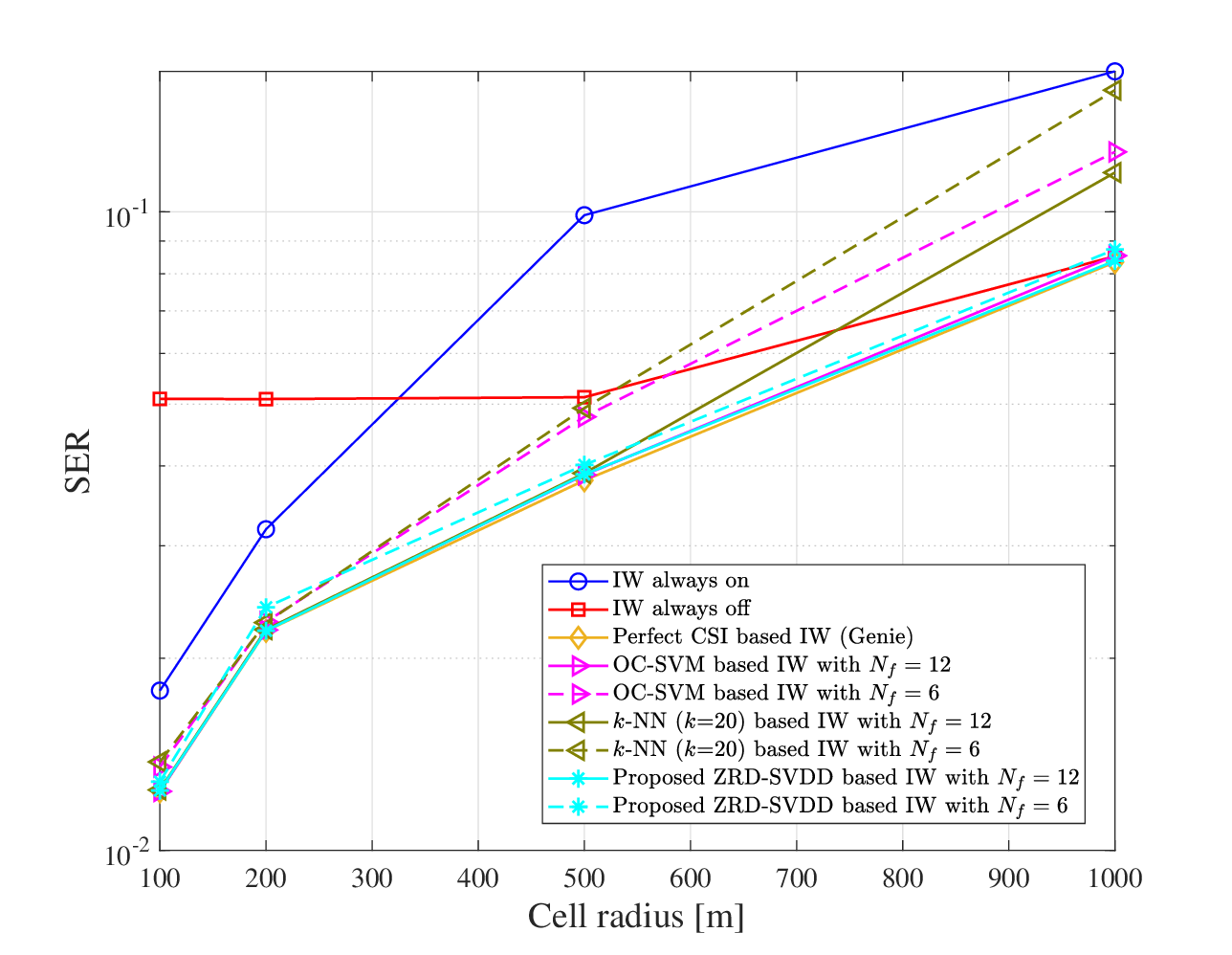}
    \caption{SER vs. cell radius: Comparison of $N_f=6$ and $N_f=12$ with $N_t = 100$, $P_{\mathrm{S}} = P_{\mathrm{I}} = 46$ dBm.}
    \label{fig:ser freqdomain}
\end{figure}

Fig. \ref{fig:ser freqdomain} presents the SER performance with $N_f=6$ and $12$ cases. Similar to Fig. \ref{fig:ser timedomain}, the ZRD-SVDD-based IW scheme demonstrates performance in line with the genie-aided IW scheme across the entire range, despite using fewer frequency resources. In contrast, baselines exhibit noticeable performance degradation when frequency resources are reduced from 12 to 6.
This superior performance of the ZRD-SVDD-based IW scheme indicates the potential for improved efficiency in practical systems.

Figs.~\ref{fig:TE BLER} and \ref{fig:TE TP} present the results of implementing the ZRD-SVDD scheme in a modem chipset, illustrating the PDSCH BLER and downlink throughput performance, respectively. In these experiments, the transmit power of the serving gNB\footnote{While 5G FR1 gNBs can transmit at up to 46~$\mathrm{dBm}$, our experimental setup uses approximately -80~$\mathrm{dBm}$. This lower power effectively compensates for the absence of signal attenuation typically experienced in cable connections and simulates the dynamic range of received signal strengths found in real wireless environments, thereby allowing for accurate emulation of field conditions.} $P_{\mathrm{S}}$ is varied from $-83~\mathrm{dBm}$ to $-77~\mathrm{dBm}$, while the interference power $P_{\mathrm{I}}$ ranges from $-99~\mathrm{dBm}$ to $-79~\mathrm{dBm}$.

As shown in Fig.~\ref{fig:TE BLER}, the performance of IW always-on and -off schemes varies with SINR levels across TDL-A, -B, and -C channel models. Notably, the proposed ZRD-SVDD scheme consistently performs better than these conventional IW schemes under all channel conditions. This TE experiment corroborates our numerical simulation results, further validating the effectiveness of the proposed ZRD-SVDD scheme.

Fig.~\ref{fig:TE TP} shows that the ZRD-SVDD-based IW scheme outperforms IW always-on and -off schemes in terms of downlink throughput regardless of channel models. The most notable performance difference is observed in the intermediate interference range, specifically the SINR of 0~$\mathrm{dB}$, when applying TDL-A and -B models. In this region, the ZRD-SVDD-based IW scheme demonstrates significant performance improvements over the conventional IW always-on and -off schemes.
Quantitatively, the throughput improvement ranges from 11\% to 21\% for the TDL-A model compared to both IW always-on and IW always-off schemes. Even more substantial gains are observed with the TDL-B model, where the performance enhancement ranges from 11\% to 32\% over the same baselines. These results highlight the adaptive nature of the ZRD-SVDD-based IW scheme, which effectively adjusts to varying interference environments.
By addressing a key limitation of conventional IW schemes, the ZRD-SVDD-based IW scheme enhances performance in challenging intermediate interference environments. This capability offers potential advantages in practical wireless communication systems.

\begin{figure}[!t]
    \centering
    \includegraphics[width=.98\columnwidth]{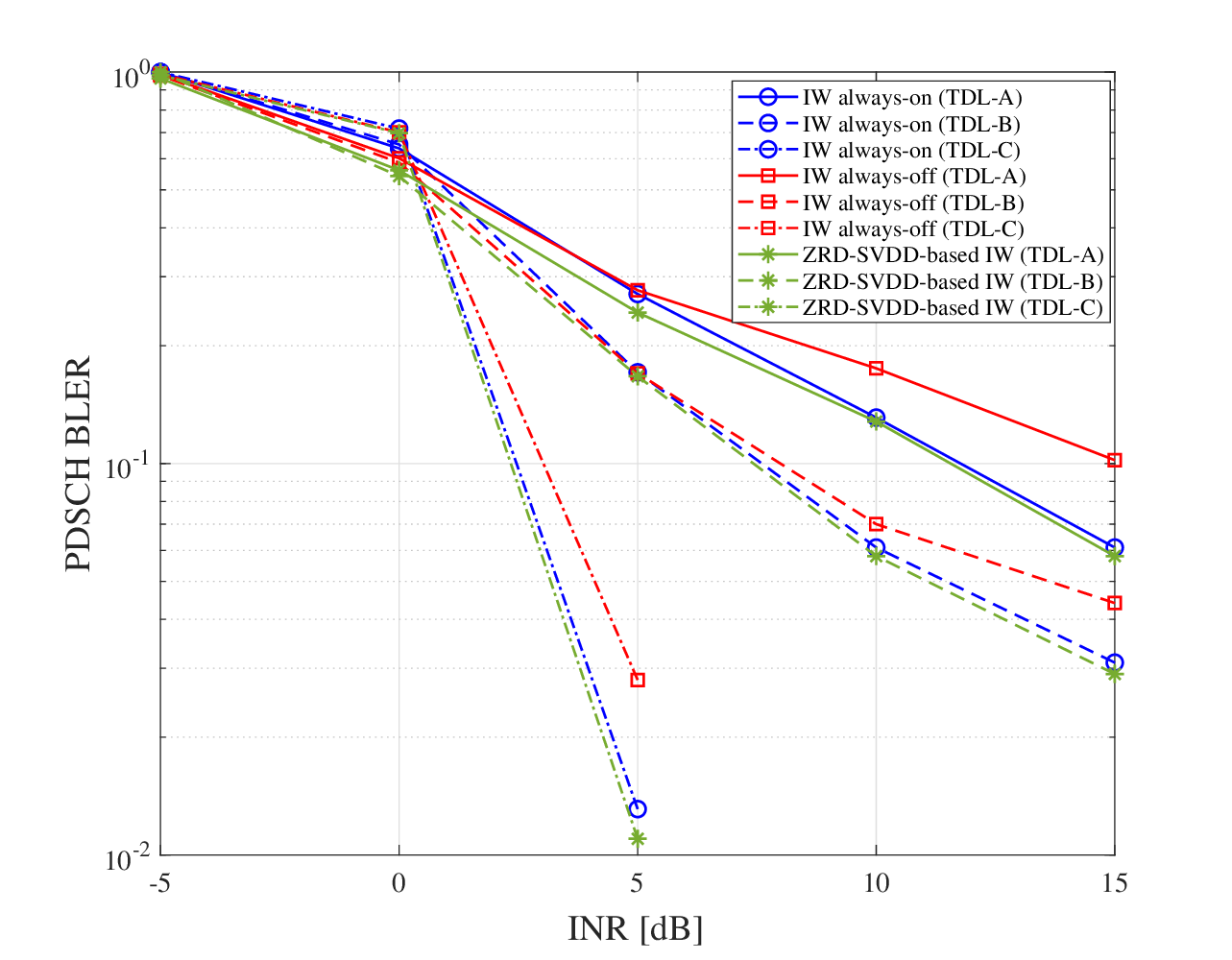}
    \caption{PDSCH BLER vs. INR with MCS~3 and Rank~1.}
    \label{fig:TE BLER}
\end{figure}

\section{Conclusions}
\label{sec6}
In this paper, we addressed the critical challenge of interference suppression in 5G NR systems, where increasing frequency reuse has intensified the impact of ICI. We proposed a novel UE-centric interference suppression scheme that enhances conventional IW by leveraging ZRD-SVDD, an effective anomaly detection technique combining Z-score normalization with deep learning-based SVDD.

Our comprehensive evaluation, including numerical simulations and TE experiments, demonstrated the significant benefits of the proposed ZRD-SVDD-based IW scheme. The results consistently showed that our proposed scheme outperforms various baselines, even under constrained time or frequency resources for training. In particular, the performance of our proposed scheme closely approaches that of the genie-aided IW scheme, which represents the performance upper bound.
Moreover, TE experiments conducted using 3GPP-defined channel models, i.e., TDL-A, -B, and -C, further validated the superior performance of our proposed scheme over conventional IW always-on and -off schemes.

These findings underscore the substantial potential of our UE-centric ZRD-SVDD-based IW scheme in mitigating ICI in 5G NR systems. This study establishes a robust foundation for advanced wireless communication systems, where interference suppression on the UE side will play a pivotal role in improving overall system performance.

\begin{figure}[!t]
    \centering
    \includegraphics[width=\columnwidth]{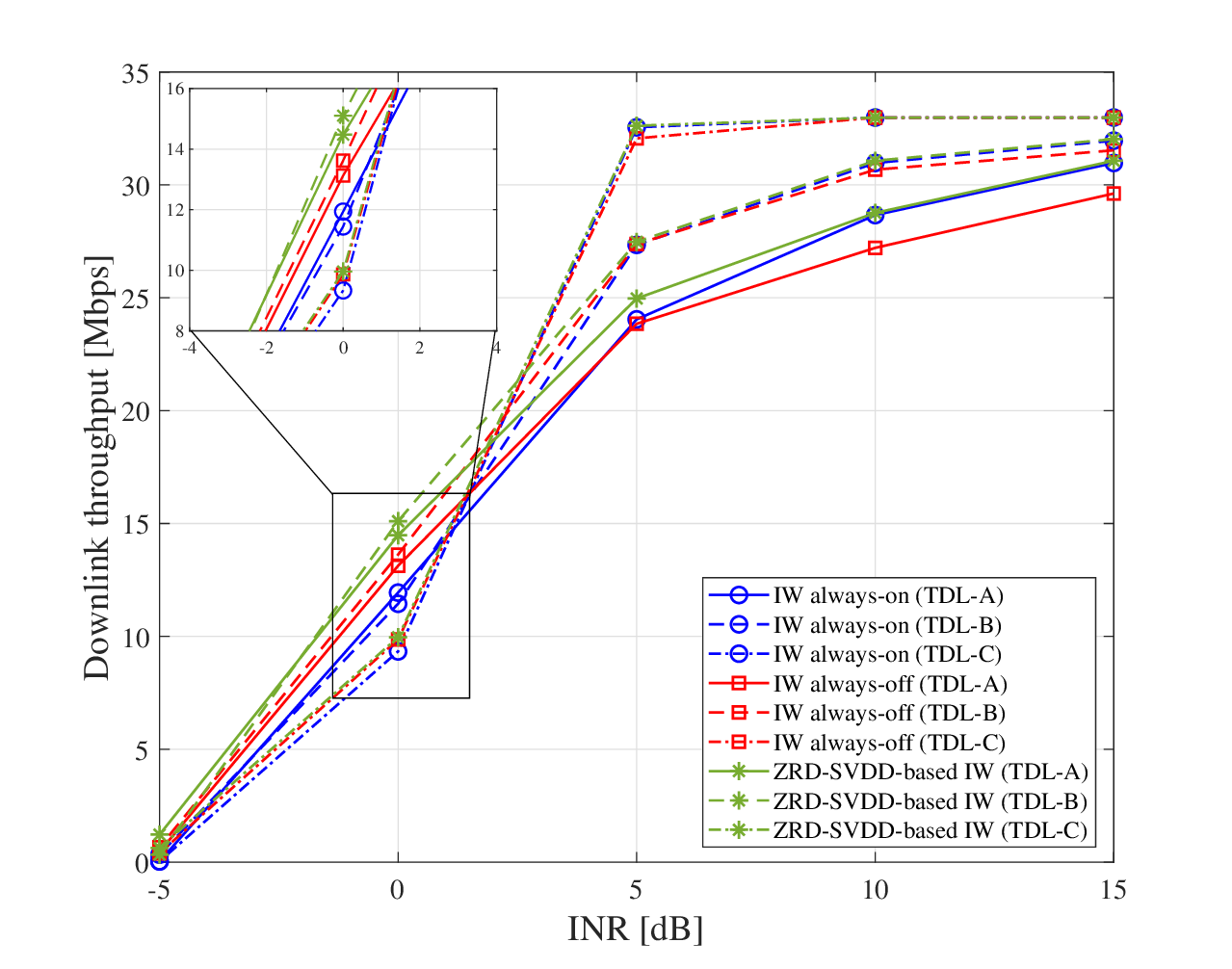}
    \caption{Downlink throughput vs. INR with MCS~3 and Rank~1.}
    \label{fig:TE TP}
\end{figure}

\section*{Appendix A \\ Proof of Theorem 1}

For a finite number of zero-mean square matrices \({\bA_t \in \mathbb{C}^{N_r\times N_r}}\), \(t= 1,\cdots, T_s\), which are independent and satisfy the condition \(\Vert \bA_t \Vert_2 \le L \) almost surely, we apply the matrix Bernstein inequality \cite{tropp2012user}, a type of concentration inequality, which is given as
\begin{align} \label{bernstein}
    \mathbb{P}\left( { \left\Vert  \sum_{t=1}^{T_s} \bA_t  \right \Vert_2} \ge \epsilon \right) \le 2N_r\exp\left(-\frac{\epsilon^2/2}{\sigma_{\mathrm{B}}^2 + {2L\epsilon}/{3}}\right),
\end{align}
where $\epsilon$ is a predefined parameter representing the estimation error tolerance level, $L$ is a positive constant denoting the upper bound of the spectral norm of $\bA_t$, and \( \sigma_{\mathrm{B}}^2 = \max\{\Vert \sum_t \mathbb{E} [\bA_t \bA_t^\mathrm{H}] \Vert_2 , \Vert \sum_t \mathbb{E} [\bA_t^\mathrm{H} \bA_t] \Vert_2\} \).

For a short duration, we can assume that the channel $\bG_m$ and transmit signal $\bz_m$ are fixed. 
The covariance of a channel is estimated using independent and identically distributed (i.i.d.) sample sequences \(\bu_{m,t} \bu_{m,t}^\mathrm{H}\), whose expectation equals the true covariance value. Based on this assumption, the true covariance matrix $\bR_m$ can be expressed as 
\begin{align}
    \bR_m = \mathbb{E}\left[{\bu_{m,t} \bu_{m,t}^\mathrm{H}}\right] = \bG_m \bz_m \bz_m^{\mathrm{H}} \bG_m^{\mathrm{H}} + \sigma_m^2 \bI_{N_r},
\end{align} for all $t =1,\cdots, T_s$.

To apply the matrix Bernstein inequality and guarantee the zero-mean condition, we define a new variable \(\bZ_t = \bu_{m,t} \bu_{m,t}^\mathrm{H} - \bR_m\). Consequently, from the result of  \eqref{bernstein}, the following inequality is satisfied as
\begin{align}
\label{diffSampleTrue}
    \mathbb{P}\left( { \left\Vert \sum_{t=1}^{T_s}\bZ_t  \right \Vert_2} \ge \epsilon \right)  \le 2N_r\exp\left(-\frac{\epsilon^2/2}{\sigma_{\mathrm{B}}^2 + {2L_z\epsilon}/{3}}\right),
\end{align}
where \(L_z\) denotes the upper bound that holds almost surely on the spectral norms of \(\bZ_t\), i.e., $\Vert \bZ_t \Vert_2 \leq L_z$.

The probability that the difference between the sample covariance and true covariance is within a specified error bound $\epsilon$ can be computed using a concentration inequality derived from \eqref{diffSampleTrue} as given below
\begin{align}
    &\mathbb{P}\left(  \frac{1}{{T_s}}{ \left\Vert \sum_{t=1}^{T_s} 
\left ( \bu_{m,t} \bu_{m,t}^\mathrm{H} - \bR_m \right ) \right \Vert_2}  < \epsilon  \right) \notag \\ 
    &=  \mathbb{P}\left( { \left\Vert \sum_{t=1}^{T_s}\bZ_t  \right \Vert_2} < \epsilon T_s  \right)  \notag\\    
    &= 1- \mathbb{P}\left( { \left\Vert \sum_{t=1}^{T_s}\bZ_t  \right \Vert_2} \ge \epsilon T_s  \right) \notag\\
     &> 1- 2N_r\exp\left(-\frac{\epsilon^2 T_s^2 /2}{\sigma_{\mathrm{B}}^2 + {2L_z \epsilon T_s}/{3}}\right) \notag \\
     &\stackrel{{(a)}}{\geq} 1- 2N_r\exp\left(-\frac{\epsilon^2 T_s^2 /2}{\sigma_{\mathrm{F}}^2 + {2L_z \epsilon T_s}/{3}}\right)
     \label{ineqZ},
\end{align}
where \( (a) \) follows the matrix norm property $\Vert \bA \Vert_2 \leq \Vert \bA \Vert_{\mathrm{F}}$, as described in \cite{golub2013matrix}. 

Now, we need to derive the variables $\sigma_{\mathrm{F}}^2$ and $L_z$ in \eqref{ineqZ} to finish the proof.
The term $\sigma_{\mathrm{F}}^2$ is expressed as
\begin{align}
\label{sigmaFrob}
\sigma_{\mathrm{F}}^2 &= \max\{\Vert \sum_t \mathbb{E} [\bZ_t \bZ_t^\mathrm{H}] \Vert_{\mathrm{F}} , \Vert \sum_t \mathbb{E} [\bZ_t^\mathrm{H} \bZ_t] \Vert_{\mathrm{F}}\} \notag \\
 &\stackrel{{(b)}}{=} \max \{ T_s\Vert \mathbb{E} [\bZ_t \bZ_t^\mathrm{H}] \Vert_{\mathrm{F}}, T_s \Vert \mathbb{E} [\bZ_t^\mathrm{H} \bZ_t] \Vert_{\mathrm{F}}\}\notag \\
    &\stackrel{{(c)}}{=} T_s \Vert \mathbb{E} [\bZ_t \bZ_t^\mathrm{H}] \Vert_{\mathrm{F}}.
\end{align}
In this derivation, \( (b) \) follows from the i.i.d. nature of the sample generation, while \( (c) \) holds due to the Hermitian property of \( \bZ_t\). To compute \(\mathbb{E} [\bZ_t \bZ_t^\mathrm{H}]\), we simplify $\bZ_t$ as
\begin{align}
\label{zoft}
\bZ_t = \bG_{m}\bz_m \bn_{m,t}^\mathrm{H} + \bn_{m,t}\bz_m^\mathrm{H}\bG_{m}^\mathrm{H} + \bn_{m,t}\bn_{m,t}^\mathrm{H} - \sigma_{m}^2 \bI_{N_r}.    
\end{align}
Based on \eqref{zoft}, $\mathbb{E} [\bZ_t \bZ_t^\mathrm{H}]$ is calculated using the following results,
\begin{align}
&\mathbb{E} [(\bG_{m}\bz_m \bn_{m,t}^\mathrm{H})(\bG_{m}\bz_m \bn_{m,t}^\mathrm{H})^\mathrm{H} ] = N_r\sigma_{m}^2 \bG_{m}\bz_m  \bz_m^\mathrm{H}\mathbf{G}_{m}^\mathrm{H}, \notag\\
&\mathbb{E} [(\bn_{m,t}\bz_m^\mathrm{H}\bG_{m}^\mathrm{H})(\bn_{m,t}\bz_m^\mathrm{H}\bG_{m}^\mathrm{H})^\mathrm{H} ] = \Vert \bG_m \bz_m \Vert_2^2\sigma_m^2\mathbf{I}_{N_r}, \notag\\
&\mathbb{E} [(\bn_{m,t} \bn_{m,t}^\mathrm{H})(\bn_{m,t} \bn_{m,t}^\mathrm{H})^\mathrm{H} ] = (N+1)\sigma_{m}^4 \bI_{N_r}, \notag \\
&\mathbb{E} [(\bn_{m,t} \bn_{m,t}^\mathrm{H}) \sigma_m^2 \bI_{N_r} ] =  \sigma_m^4{\mathbf{I}_{N_r}}.
\end{align}
Consequently, $\mathbb{E} [\bZ_t \bZ_t^\mathrm{H}]$ is derived as 
\begin{align} \label{final}
    \mathbb{E} [\bZ_t \bZ_t^\mathrm{H}] &= N_r \sigma_{m}^2 \bg_m\bg_m^\mathrm{H} + \Vert \bg_m \Vert_2^2 \sigma_m^2 \bI_{N_r}  + N_r \sigma_m^4 \bI_{N_r} \notag \\
    &= N_r \sigma_{m}^2 \bg_m\bg_m^\mathrm{H} + C_{1} \sigma_m^2 \bI_{N_r},
\end{align}
where \( \bg_m = \bG_m \bz_m\) and $C_{1} = \Vert \bg_m \Vert_2^2  + N_r \sigma_m^2 $. 

Next, we compute the Frobenius norm of $\mathbb{E} [\bZ_t \bZ_t^\mathrm{H}]$ using the matrix trace operation as follows
\begin{align}
    \Vert \mathbb{E} [\bZ_t \bZ_t^\mathrm{H}] \Vert_\mathrm{F}^2 = \mathrm{Tr}\left(\mathbb{E} [\bZ_t \bZ_t^\mathrm{H}] \mathbb{E} [\bZ_t \bZ_t^\mathrm{H}]^{\mathrm{H}} \right).
    \label{tracematrix}
\end{align}
This calculation can be decomposed into three components. For the first term, we calculate the trace as 
\begin{align}
    \mathrm{Tr} & \left(\left( N_r \sigma_m^2  \bg_m \bg_m^{\mathrm{H}} \right)  \left( N_r \sigma_m^2 \bg_m \bg_m^{\mathrm{H}} \right)^{\mathrm{H}} \right) \notag \\
    &= N_r^2 \sigma_m^4 \mathrm{Tr} \left( \bg_m \bg_m^{\mathrm{H}} \bg_m \bg_m^{\mathrm{H}} \right) = N_r^2 \sigma_m^4\Vert \bg_m  \Vert_2^4.
\end{align}
Subsequently, we compute the trace of the second term as follows
\begin{align}
    \mathrm{Tr}  \left( 2 C_1N_r \sigma_m^4 \bg_m \bg_m^{\mathrm{H}} \right) &= 2 C_1N_r \sigma_m^4 \mathrm{Tr}\left( \bg_m \bg_m^{\mathrm{H}} \right) \notag \\
    &= 2 C_1N_r \sigma_m^4 \Vert \bg_m  \Vert_2^2.
\end{align}
Finally, we evaluate the trace resulting from the third term, which involves the identity matrix
\begin{align}
    \mathrm{Tr} & \left( \left( C_1 \sigma_m^2 \bI_{N_r} \right) \left(  C_1 \sigma_m^2 \bI_{N_r} \right)^{\mathrm{H}} \right) \notag = C_1^2 \sigma_m^4 N_r.
\end{align}
By combining these three components, we obtain the final expression for $\sigma_{\mathrm{F}}^2$ as
\begin{align}
\label{sigma_F_2}
\sigma_{\mathrm{F}}^2 = T_s \sigma_m^2 \sqrt{N_r^2 \Vert \bg_m  \Vert_2^4 + 2 C_1N_r  \Vert \bg_m  \Vert_2^2 + C_1^2  N_r}.
\end{align}

To calculate $L_z$, we use the expression for $\bZ_t$ given in \eqref{zoft}. The upper bound for $\Vert\bZ_t \Vert_2$ can be derived as follows
\begin{align}
    \Vert \bZ_t \Vert_2 &\leq \mathbb{E}\left[\Vert\bg_{m}\bn_{m,t}^\mathrm{H} \Vert_2\right] + \mathbb{E}\left[\Vert \bn_{m,t}\bg_m^\mathrm{H} \Vert_2\right] + \mathbb{E}\left[\Vert \bn_{m,t}\bn_{m,t}^\mathrm{H} \Vert_2 \right] \notag \\
    & \enspace- \Vert \sigma_{m}^2 \bI_{N_r} \Vert_2 \notag \\
    &\approx 2 \sigma_m \sqrt{N_r} \Vert \bg_m \Vert_2 + \left( N_r - 1 \right) \sigma_m^2.
\end{align}
This approximation is based on the expected values of the norms of the noise terms. Consequently, we can approximate $L_z$ as
\begin{align}
\label{L_z}
    L_z \approx 2 \sigma_m \sqrt{N_r} \Vert \bg_m \Vert_2 + \left( N_r - 1 \right) \sigma_m^2,
\end{align}
which finishes the proof.

\bibliographystyle{IEEEtran}
\bibliography{refs_all}

\vfill
	
\end{document}